\providecommand{\U}[1]{\protect\rule{.1in}{.1in}}
\newtheorem{theorem}{Theorem}
\newtheorem{corollary}[theorem]{Corollary}
\newtheorem{lemma}[theorem]{Lemma}
\numberwithin{equation}{section}
\let\oldref\ref
\renewcommand{\ref}[1]{(\oldref{#1})}
\DeclareMathOperator{\Tr}{Tr}
\newcommand{\ket}[1]{{\vert{#1}\rangle}}
\newcommand{\bra}[1]{{\langle{#1}\vert}}
\newcommand{\wtil}{\widetilde}
\begin{document}

\title{Strong converse for the feedback-assisted classical capacity of entanglement-breaking channels} 
\author{Dawei Ding\thanks{Department of Applied Physics,   Stanford University, Stanford, California 94305-4090, USA}
\and Mark M. Wilde\thanks{Hearne Institute for Theoretical Physics, Department of Physics and Astronomy, Center for Computation and Technology, Louisiana State University,
Baton Rouge, Louisiana 70803, USA}}
\maketitle

\begin{abstract}
Quantum entanglement can be used in a communication scheme to establish a correlation between successive channel inputs that is impossible by classical means. It is known that the classical capacity of quantum channels can be enhanced by such entangled encoding schemes, but this is not always the case. In this paper, we prove that
a strong converse theorem holds for the classical capacity of an entanglement-breaking channel even when it is assisted
by a classical feedback link from the receiver to the transmitter. In doing so, we identify a bound on the strong converse exponent, which determines the exponentially decaying rate at which the success probability tends to zero, for a sequence of codes with  communication rate exceeding capacity. Proving a strong converse, along with an achievability theorem, shows that the classical capacity is a sharp boundary between reliable and unreliable communication regimes. One of the main tools in our proof is the sandwiched R\'enyi relative entropy. The same method of proof is used to derive an exponential bound on the success probability when communicating over an arbitrary quantum channel assisted by classical feedback, provided that the transmitter does not use entangled encoding schemes.
\end{abstract}

\section{Introduction}
The classical theory of communication is one of the modern successes of applied mathematics \cite{book1991cover,GK12}. It is arguably one of the foundations of our current information age and provides new ways of thinking about problems in many other fields of study, such as physics and in particular quantum mechanics. The interaction between these two fields is mutual; while some problems in quantum mechanics can be turned into communication problems, the existence of quantum phenomena strongly suggests that we should rethink many aspects of communication theory. Not only does the notion of a quantum state challenge what we mean by ``information,'' but the possibilities due to quantum mechanics give rise to new classes of communication protocols. With respect to this latter consideration, some fundamental motivating questions
for quantum information theory have traditionally been and still are the following:
\begin{enumerate}
  \item Is the theory of classical communication affected at a fundamental level by the consideration of quantum mechanical phenomena? 
  \item Does using quantum states and measurement for classical communication have practical advantages over using classical techniques?
\end{enumerate}

In an attempt to answer these questions, one of the primary
goals is to study the ability of a quantum channel to communicate classical information, that is, bits.
Like many communication problems,
this ability is quantified by the notion of channel capacity. The \textit{classical capacity} $C$ of a quantum channel $\mathcal{N}$ is defined to be the maximum rate of communication such that the decoding error probability can tend to zero in the limit of many channel uses. With this definition, one natural question is to determine
 how to compute the classical capacity. We know that the Holevo-Schumacher-Westmoreland (HSW) theorem \cite{H, SW} provides a lower bound:
\begin{equation}
  C(\mathcal{N}) \geq \chi(\mathcal{N}) \equiv \sup_{\{p_X(x),\rho_x\}} I(X;B)_\rho ,
\end{equation}
where $\{p_X(x),\rho_x\}$ is an ensemble of quantum states such that each $\rho_x$ can be input to the channel, and $I(X;B)_\rho \equiv H(X)_\rho + H(B)_\rho - H(XB)_\rho$ is the quantum mutual information of the following classical-quantum state:
\begin{equation}
  \rho_{XB} \equiv \sum_x p_X(x) |x\rangle \langle x|_X \otimes \mathcal{N}(\rho_x) ,
\end{equation}
where $\{|x\rangle\}$ is an orthonormal basis for the classical reference system and $H(G)_\sigma$ is the von Neumann entropy of a quantum state $\sigma_G$ on system $G$.
The quantity $\chi(\mathcal{N})$ is called the \textit{Holevo information} of the channel. 

The classical capacity $C(\mathcal{N})$ can actually be formally rewritten in terms of the Holevo information as well, via a procedure known as regularization. Before doing so, note that we obtain the HSW theorem by considering only encoding procedures that do not use entangled inputs\footnote{The highest possible rate with this restriction, proven by the HSW theorem to be $\chi(\mathcal{N})$, is then a lower bound on the classical capacity.}. That is, each quantum system sent through the channel is not entangled with any other system that is sent. To get the classical capacity, it is generally necessary to incorporate entangled inputs into the calculation.
%The question hence becomes whether or not entangled inputs can increase the capacity of the channel, which we would like to express mathematically. 
The approach given by \cite{H, SW} is to multiplex the channel such that one use of this multiplexed channel corresponds to multiple uses of the original channel. A multipartite entangled state describing several inputs across different uses of the original channel can now be simulated by the corresponding single input state to the multiplexed channel. We can therefore express the maximum rate for blocks of size $n$ in terms of a Holevo information:
\begin{equation}
  \frac{1}{n} \chi(\mathcal{N}^{\otimes n})
  \label{multiplex}.
\end{equation}
To obtain the classical capacity, we simply allow for inputs entangled across arbitrarily many channel uses: 
\begin{equation}
  C(\mathcal{N}) = \lim_{n\rightarrow \infty} \frac{1}{n}\chi(\mathcal{N}^{\otimes n}) .
  \label{capLim}
\end{equation}
This idea put together with a converse theorem establishes
the regularized expression in \eqref{capLim} as being equal to the classical capacity.

Unfortunately, computing the classical capacity this way is clearly intractable. This prompts us to look for special cases. We observe that the limit in \eqref{capLim} is equal to $\chi(\mathcal{N})$ iff the Holevo information satisfies a tensor-power additivity property (see Appendix~\ref{app:additivity} for a brief derivation):
\begin{equation}
  \forall n \quad \chi(\mathcal{N}^{\otimes n}) = n \chi(\mathcal{N}).
  \label{add}
\end{equation}
Another way to restate the above is that entangled inputs do not increase capacity iff \eqref{add} is satisfied. However, this is not true for some channels, as was shown in \cite{notAdd}. The next question then appears: What characterizes channels that satisfy \eqref{add}?

A sufficient condition for a quantum channel to satisfy tensor-power additivity is for it to be entanglement-breaking (EB) \cite{EB}. Let $\mathcal{N}_{A\rightarrow B}$ denote a quantum channel, where the arrow notation indicates that the channel maps a state of the input system $A$ to a state in an output system $B$. An EB channel is defined such that for any bipartite state $\rho_{AA'}$, the output state
\begin{equation}
  \left(\mathcal{N}_{A\rightarrow B} \otimes \operatorname{id}_{A'}  \right) \left(\rho_{AA'}\right)
\end{equation}
is separable. That is, the output of an EB channel can be written as a convex sum of product states. Effectively, the channel ``breaks'' the entanglement between $A$ and $A'$. Previous results have established that the Holevo information of EB channels is additive\footnote{Shor proved that the Holevo information of a tensor product of an EB channel with any other channel is equal to the sum of their respective Holevo informations. This form of additivity is stronger than \eqref{add} and is the one usually found in the literature.} \cite{ShorEB}. This is intuitive since any entanglement of the inputs is broken by the channel. Following this line of thinking, one could consider making a stronger statement by allowing additional resources to assist the communication but not to the point that entanglement can be established. Indeed, \cite{EBFB} proves a generalization of tensor-power additivity for EB channels with noiseless classical feedback. Furthermore, their results show that classical feedback does not increase the capacity of EB channels.

There are also possible stronger statements in another direction. The direct part of the original HSW theorem states that if the rate $R$ of communication is less than the Holevo information $\chi$, then there exists a sequence of protocols $\mathcal{P}^n$ such that the probability of error for such a sequence satisfies 
\begin{equation}
  \lim_{n\rightarrow \infty} p_e(n) = 0 ,
  \label{}
\end{equation}
where $n$ is the number of channel uses. A result from \cite{PhysRevA.76.062301} sharpens this claim by showing that 
there exists a sequence $\mathcal{P}^n$ such that
\begin{equation}
  p_e(n) \leq 2^{-kn} ,
  \label{}
\end{equation}
for some $k>0$ determined by the channel. The converse part of the original HSW theorem can be strengthened in a similar manner. It states that if $R>C(\mathcal{N})$, for any sequence of protocols $\mathcal{P}^n$, then
\begin{equation}
  \lim_{n\rightarrow \infty} p_e(n) > 0 .
  \label{}
\end{equation}
This is known as the \textit{weak converse}. In contrast, a \textit{strong converse} is symmetric to the achievability result above and states that regardless of the protocols used, the success probability decreases to zero in the asymptotic limit whenever $R>C(\mathcal{N})$. The strong converse can be sharpened as well whenever there is a constant separation between $R$ and $C(\mathcal{N})$, such that the convergence of the success probability to zero is exponential in $n$.

There are many reasons why we would want to prove a strong converse. First, a strong converse enriches our understanding of the capacity. A strong converse along with an achievability theorem shows that the capacity is a sharp boundary between reliable and unreliable communication regimes. This, amongst other results, indicates that the classical capacity of a quantum channel is a fundamental quantity of interest. Second, a strong converse is more relevant in practice than is the weak converse. A realistic quantum communication scheme has a finite blocklength; that is, the encoding is across a finite number of channel uses. Although the weak converse does provide a lower bound on the probability of error in the non-asymptotic regime, the strong converse improves the bound considerably. It expresses a trade-off between rate, error probability, and blocklength restrictive enough to be easily checked numerically or experimentally.

While a classical version of the strong converse is known for arbitrary discrete memoryless classical channels \cite{WolfStrong, A73}, it is still open whether or not strong converses hold for memoryless quantum channels.
After some early work \cite{ON99,W}, the strong converse has been proved for special cases, in particular for channels with certain symmetry \cite{KW}, for EB channels \cite{WWY13}, and for a wide class of quantum Gaussian channels \cite{BPWW14}. Given the strong converse results for EB channels \cite{WWY13} and the weak converse for EB channels with feedback \cite{EBFB}, it is natural to ask if these two statements are true at the same time. We can also ask directly for the strong converse for unentangled inputs when a feedback link is available. These are the main questions that we address in this paper.

\section{Summary of Results}

In this paper, we derive an explicit exponential bound on the success probability of a classical communication scheme that uses an entanglement-breaking channel along with classical feedback. The same method of proof can be used to establish a bound for arbitrary quantum channels with classical feedback, provided that the inputs are not entangled across multiple uses of the channel. When the communication rate exceeds the classical capacity, these exponential bounds immediately imply strong converse theorems for these settings.

We now provide an outline of the proof:
\begin{enumerate}
  \item First, taking as a starting point the general approach of \cite{N}, relating hypothesis testing to unassisted communication, we bound the success probability of an arbitrary feedback-assisted classical communication protocol by a sandwiched R\'enyi relative entropy \cite{MDSFT13,WWY13}.
  \item Next, one of the main observations from \cite{EBFB} is that the sender and receiver's systems are separable at all times throughout such a protocol, whenever the communication channel is entanglement-breaking. We use this fact and an entropy inequality from \cite{King} to split the relative entropy into two terms. The first term is bounded by an $\alpha$-information radius, which is a measure of the range of states a channel can output. We then equate this to the sandwiched $\alpha$-Holevo information \cite{WWY13}, which is a R\'enyi generalization of the Holevo information. The second term is bounded via monotonicity by another sandwiched R\'enyi relative entropy, to which we recursively apply the same argument.
  \item This gives the following bound on the probability of success for any finite blocklength $n$:
    \begin{equation}
      p_{\operatorname{succ}} \leq 2^{-n\sup_{\alpha>1} \frac{\alpha-1}{\alpha} \left( R- \widetilde{\chi}_\alpha(\mathcal{N}) \right) } ,
      \label{tehEqn}
    \end{equation}
where $R$ is the rate of communication and $\mathcal{N}$ is the EB channel. It follows from previous arguments \cite{WWY13} that when $R>\chi(\mathcal{N})$, the right hand side of \eqref{tehEqn} is a decaying exponential, thereby establishing the strong converse. We provide an alternate (arguably simpler) proof of this fact (similar to those in \cite{MH,QFBStrong}) by establishing that the $\alpha$-Holevo information converges continuously to the Holevo information as $\alpha$ approaches $1$.
\end{enumerate}
 Appendix~\ref{app:bowen} includes a brief review of the argument for the weak converse from \cite{EBFB}.

\section{Preliminaries}
In this section we provide some necessary definitions, concepts, and previous results used in the derivation of \eqref{tehEqn}.

\subsection{Quantum states, measurements, operator norms, and quantum channels}
We start with definitions of relevant mathematical notions from quantum mechanics. Given a finite-dimensional Hilbert space $\mathcal{H}$, let $\mathcal{B}(\mathcal{H})$ denote the algebra consisting of linear operators acting on $\mathcal{H}$. A relevant measure of a operator $X$ is its \textit{Schatten $\alpha$-norm}, which is defined as
\begin{equation}
  \Vert X \Vert_\alpha \equiv \left\{ \Tr\left[ |X|^\alpha \right]\right\}^{1/\alpha},
\end{equation}
where $\alpha \geq 1$ and $|X| \equiv \sqrt{X^\dagger X}$.

The set of \textit{quantum states} is a convex subset of $\mathcal{B}(\mathcal{H})$ given by
\begin{equation}
  \mathcal{S}(\mathcal{H}) = \left\{ \rho \in \mathcal{B}(\mathcal{H}): \rho\geq 0, \Tr \rho=1 \right\} ,
\end{equation}
where the notation $\rho\geq0$ means that $\rho$ is positive semidefinite. For composite states, we consider the tensor product of two Hilbert spaces $\mathcal{H}_A$ and $\mathcal{H}_B$, denoted by $\mathcal{H}_A\otimes\mathcal{H}_B$. We can obtain from the overall density operator $\rho_{AB} \in \mathcal{S}(\mathcal{H}_A\otimes\mathcal{H}_B)$ the reduced density operator $\rho_A$ corresponding to the quantum state on only the $A$ system by performing a partial trace:
\begin{equation}
  \rho_A \equiv \Tr_B(\rho_{AB}). %\equiv \sum_b (I_A \otimes \bra{b}_B) \rho_{AB} (I_A \otimes \ket{b}_B)
  \label{}
\end{equation}
%where $\{\ket{b}_B\}$ is an orthonormal basis for the Hilbert space corresponding to the $B$ system. 
The partial trace can be defined as
\begin{equation}
\Tr_B(\ket{x_1}\bra{x_2}_A \otimes \ket{y_1}\bra{y_2}_B) \equiv
\ket{x_1}\bra{x_2}_A \langle y_2 \vert y_1 \rangle
\end{equation}
for vectors 
$\ket{x_1}$, $\ket{x_2}$, 
$\ket{y_1}$, and $\ket{y_2}$, and then extended by linearity.
A state $\rho_{AB}$ is \textit{separable} if it can be written as
\begin{equation}
  \rho_{AB} = \sum_x p(x) \rho_A^x \otimes \rho_B^x,
  \label{}
\end{equation}
where $p(x)$ is a probability distribution and
$\{\rho_A^x\}$ and $\{\rho_B^x\}$ are sets of states.

A positive operator-valued measure (POVM) consists of a set $\left\{ \Lambda_m \right\}$ of positive semidefinite operators indexed by $m$ and corresponding to different measurement results. The set satisfies $\sum_m \Lambda_m =I$, which allows us to interpret the quantity
\begin{equation}
  p_m \equiv \Tr(\rho\Lambda_m)
\end{equation}
as the probability of measuring $m$ given a quantum state $\rho$.

We next consider maps on operators and in particular quantum states. A linear map $\Psi:\mathcal{B}(\mathcal{H}_A) \rightarrow \mathcal{B}(\mathcal{H}_B)$ is called \textit{positive} if $\forall \tau \in \mathcal{B}(\mathcal{H}_A)$, $\tau \geq 0$ implies $\Psi(\tau) \geq 0$. It is called \textit{completely positive} if id$_R \otimes \Psi$ is positive for an arbitrary auxiliary system $R$, where id$_R$ is the identity map on $R$. A map $\Psi$ is a \textit{quantum channel} if it is linear, completely positive, and also trace-preserving. 

\subsection{Classical feedback-assisted classical communication protocols}
\label{subsec:protocol}
\begin{figure}
[ptb]
\begin{center}
\includegraphics[width=\textwidth]%
{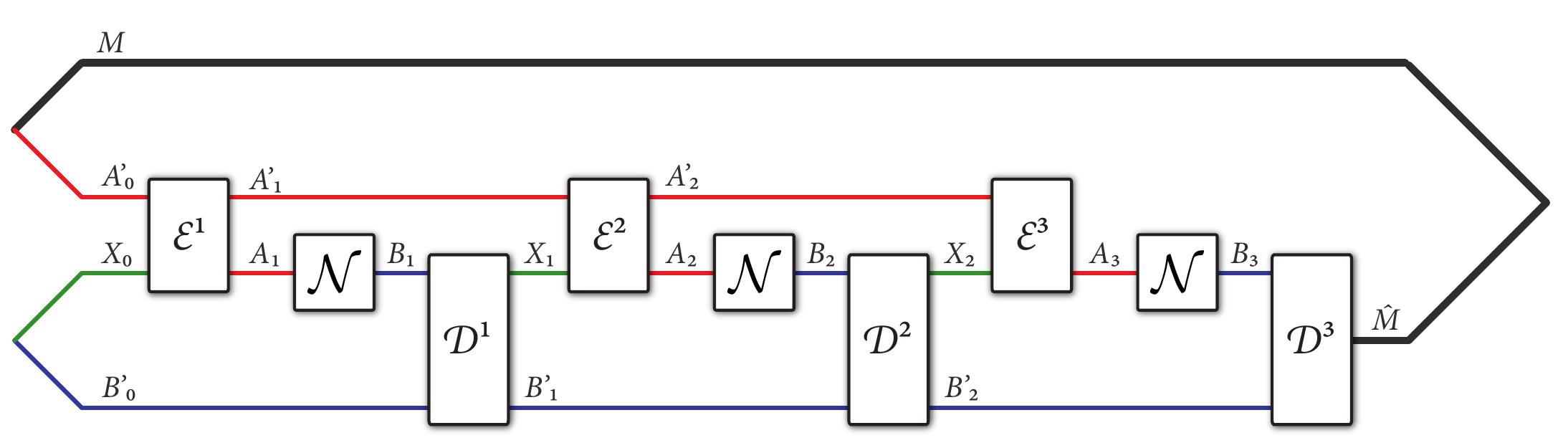}
\caption{A general protocol for feedback-assisted classical communication. The sender manipulates the systems labeled by $A$ and the receiver those labeled by $B$. Every $X$ system is classical and represents noiseless classical feedback from
the receiver to the sender.}%
\label{fig:FB-EB}%
\end{center}
\end{figure}
Figure \ref{fig:FB-EB} depicts the most general three-round protocol for classical feedback-assisted classical communication. The generalization to $n$ rounds is clear. The protocol begins with Alice preparing a classical register $M$ with the
message to be sent. This is correlated with some system $A_{0}^{\prime}$. Bob
uses the classical feedback channel to send a classical system $X_{0}$ correlated with some quantum system $B_{0}'$ to Alice. The global state
is then%
\begin{equation}
\rho_{MA_{0}^{\prime}X_{0}B_{0}'}\equiv\sum_{m}p_{M}(  m)
\left\vert m\right\rangle \left\langle m\right\vert _{M}\otimes\rho
_{A_{0}^{\prime}}^{m}\otimes\sum_{x}p_{X_0}(  x)  \left\vert
x\right\rangle \left\langle x\right\vert _{X_0}\otimes\rho_{B_{0}'}^{x}.
\label{initial}
\end{equation}
We will track the state conditioned on a particular value $m$ of the message
register $M$, given by%
\begin{equation}
\rho_{A_{0}^{\prime}X_{0}B_{0}'}^{m}\equiv\rho_{A_{0}^{\prime}}^{m}%
\otimes\sum_{x}p_{X_0}(  x)  \left\vert x\right\rangle \left\langle
x\right\vert _{X_0}\otimes\rho_{B_{0}'}^{x} .
\end{equation}
Alice performs an encoding $\mathcal{E}_{A_{0}^{\prime}X_{0}\rightarrow
A_{1}^{\prime}A_{1}}^{1}$. The state at this point is%
\begin{align}
\rho_{A_{1}^{\prime}A_{1}B_{0}'}^{m} &  \equiv\mathcal{E}_{A_{0}^{\prime
}X_{0}\rightarrow A_{1}^{\prime}A_{1}}^{1}\!\left(  \rho_{A_{0}^{\prime}%
X_{0}B_{0}'}^{m}\right)  \\
&  =\sum_{x}p_{X_0}(  x)  \mathcal{E}_{A_{0}^{\prime}X_{0}\rightarrow A_{1}^{\prime
}A_{1}}^{1}\!\left(  \rho_{A_{0}^{\prime}}^{m}\otimes\left\vert
x\right\rangle \left\langle x\right\vert _{X_0}\right)   
\otimes\rho_{B_{0}'}^{x},
\end{align}
We note that the state $\rho_{A_{1}^{\prime}A_{1}B_{0}'}^{m}$ is separable with respect
to the cut $A_{1}^{\prime}A_{1}:B_{0}'$. 

Next, Alice uses the EB channel $\mathcal{N}_{A_{1}\rightarrow
B_{1}}$ for the first time by sending system $A_{1}$, leading to the state%
\begin{equation}
\rho_{A_{1}^{\prime}B_{1}B_{0}'}^{m}\equiv\mathcal{N}_{A_{1}\rightarrow
B_{1}}\!\left(  \rho_{A_{1}^{\prime}A_{1}B_{0}'}^{m}\right)  .
\end{equation}
Since the channel is entanglement-breaking and the state before the channel was already separable with respect to the $A_{1}^{\prime}A_{1}:B_{0}'$ cut, the state after the channel is fully separable, that is, it is separable across all possible partitions. Now, consider instead that Alice uses an arbitrary channel $\mathcal{N}_{A_{1}\rightarrow B_{1}}$  but with an input separable across the $A_1' : A_1$ cut (such that $\rho_{A_1' A_1 B_0'}^m$ is fully separable). Then, the state $\rho_{A_{1}^{\prime}B_{1}B_{0}'}^{m}$ is again fully separable. We write it in the form
\begin{equation}
\rho_{A_{1}^{\prime}B_{1}B_{0}'}^{m}=\sum_{y}p_{Y}(
y)  \rho_{A_{1}^{\prime}}^{y,m}\otimes\rho_{B_{1}}^{y,m}%
\otimes\rho_{B_{0}'}^{y,m}.
\end{equation}

Bob now applies the decoding map $\mathcal{D}_{B_{1}B_{0}'\rightarrow
X_1B_{1}^{\prime}}^{1}$, where $X_1$ is the classical system that is sent back to Alice.
The state at this point is%
\begin{align}
\rho_{A_{1}^{\prime}X_1B_{1}^{\prime}}^{m} &  \equiv\mathcal{D}_{B_{1}%
B_{0}'\rightarrow X_1B_{1}^{\prime}}^{1}\left(  \rho_{A_{1}^{\prime}B_{1}B_{0}'%
}^{m}\right)  \\
&  =\sum_{x_{1}}p_{X_{1}}(  x_{1})  \left\vert x_{1}\right\rangle
\left\langle x_{1}\right\vert _{X_1}\otimes\sum_{z_{1}}p_{Z_{1}|X_{1}}(
z_{1}|x_{1})  \rho_{A_{1}^{\prime}}^{x_{1},z_{1}}\otimes\rho
_{B_{1}^{\prime}}^{x_{1},z_{1}}.
\end{align}
which is fully separable. Hence, after Alice applies the second encoder $\mathcal{E}^2$ and then sends it through the channel, the state will still be fully separable. The key observation here is that if we use an EB channel or an arbitrary channel with separable inputs, \textit{the state is always separable across a cut that divides Alice and Bob's systems.}

The only difference in subsequent rounds is the final measurement. Say there are $n$ rounds in the protocol. At the last round, Bob measures the state $\rho_{B_n B_{n-1}'}$ using a POVM given by $\left\{ D^m \right\}$ with elements corresponding to different possible messages that Alice sent.

\subsection{R\'enyi relative entropies and bounds on success probability}

An important classical information theoretic quantity is the R\'enyi relative entropy, which can be generalized to the quantum case in a number of ways. In this paper, we use the \textit{sandwiched quantum R\'enyi relative entropy} \cite{MDSFT13, WWY13} which is given by
\begin{equation}
  \widetilde{D}_\alpha(\rho \Vert \sigma) \equiv \left\{ 
    \begin{array}{ll}
      \frac{1}{\alpha-1} \log\left[ \Tr\left( \left( \sigma^{(1-\alpha)/(2\alpha)}\rho\sigma^{(1-\alpha)/(2\alpha)} \right)^\alpha \right)\right] & : \rho \not\perp \sigma \wedge (\text{supp}(\rho) \subseteq \text{supp}(\sigma) \vee \alpha \in (0,1)) \\
      +\infty & : \text{otherwise}
    \end{array}
    \right.
\end{equation}
where $\alpha \in (0,1) \cup (1,\infty)$ and $\rho \not\perp \sigma$ means $\rho,\sigma$ are non-orthogonal quantum states. Note that all logarithms in this paper are taken base two.

We now recall some properties of the sandwiched R\'enyi relative entropy. For fixed $\rho$ and $\sigma$, the function $\alpha\mapsto\widetilde{D}_\alpha(\rho \Vert \sigma)$ is monotone non-decreasing \cite{MDSFT13}. It also converges to the
quantum relative entropy $D(\rho\Vert\sigma)$ \cite{Umegaki} in the limit as $\alpha \to 1$ \cite{MDSFT13,WWY13}:
\begin{equation}
  \lim_{\alpha\rightarrow 1} \widetilde{D}_\alpha(\rho \Vert \sigma) = D(\rho\Vert\sigma) , \label{eq:sandwiched-to-Umegaki}
\end{equation}
where
\begin{equation}
D(\rho\Vert\sigma) \equiv \left\{ 
    \begin{array}{ll}
      \Tr\left[ \rho\left( \log\rho-\log\sigma \right) \right] & :
      \text{supp}(\rho) \subseteq \text{supp}(\sigma)\\
      +\infty & : \text{otherwise}
    \end{array}
    \right. .
\end{equation}
 Furthermore, it satisfies the data-processing inequality for $\alpha \in [1/2,1)\cup (1,\infty)$ \cite{FL13,B13monotone}; that is, for all quantum channels $\mathcal{N}$,
\begin{equation}
   \widetilde{D}_\alpha(\mathcal{N}(\rho) \Vert \mathcal{N}(\sigma)) \leq \widetilde{D}_\alpha(\rho \Vert \sigma) .
\end{equation}
In particular, consider the following replacement channel which simply replaces the input with some state $\omega$:
\begin{equation}
 \mathcal{R_\omega}(\rho) \equiv \Tr(\rho) \omega .
\end{equation}
It is easy to see that for all $\omega$, $\widetilde{D}_\alpha(\omega\Vert\omega) =0$, implying that
\begin{equation}
    \widetilde{D}_\alpha(\rho \Vert \sigma)\geq\widetilde{D}_\alpha(\mathcal{R_\omega}(\rho) \Vert \mathcal{R_\omega}(\sigma)) =0 ,
\end{equation}
which shows that the sandwiched R\'enyi relative entropy is non-negative
whenever its arguments are quantum states $\rho$ and $\sigma$.

Using the sandwiched R\'enyi relative entropy, we can define the
sandwiched \textit{$\alpha$-Holevo information} \cite{WWY13} of an ensemble, that is, a classical probability distribution of quantum states, $\left\{ p_X(x),\rho_x \right\}$ as
\begin{equation}
  \widetilde{\chi}_\alpha(\left\{ p_X(x),\rho_x \right\}) \equiv \inf_{\sigma_R \in \mathcal{S}(\mathcal{H}_R)} \widetilde{D}_\alpha(\rho_{XR}\Vert\rho_X\otimes\sigma_R)
\end{equation}
where
\begin{equation}
  \rho_{XR} \equiv \sum_x p_X(x) \vert x\rangle\langle x\vert_X \otimes \left( \rho_x \right)_R
\end{equation}
and $\rho_x$ are states of a system $R$. With this, we define the $\alpha$-Holevo information of a quantum channel $\mathcal{N}$ as
\begin{equation}
  \widetilde{\chi}_\alpha(\mathcal{N}) \equiv \sup_{\left\{p_X(x),\rho_x \right\}} \widetilde{\chi}_\alpha\left(\left\{ p_X(x),\mathcal{N}(\rho_x) \right\}\right) .
\end{equation}
We also define the $\alpha$-information radius of a channel $\mathcal{N}:
\mathcal{S}(\mathcal{H}_A) \mapsto
\mathcal{S}(\mathcal{H}_B)$ as
\begin{equation}
  \widetilde{K}_\alpha(\mathcal{N}) \equiv \inf_{\sigma \in \mathcal{S}(\mathcal{H}_B)} \sup_{\rho \in \mathcal{S}(\mathcal{H}_A)} \widetilde{D}_\alpha (\mathcal{N}(\rho) \Vert \sigma)
  \label{} .
\end{equation}
We note that since $\widetilde{D}_\alpha$ is monotonically non-decreasing in $\alpha$, so are $\widetilde{\chi}_\alpha$ and $\widetilde{K}_\alpha$.

\begin{comment}
It is known that the  mutual information of a bipartite state $\rho_{AB}$ can be written as a relative entropy:
\begin{equation}
  I(A;B)_\rho = D(\rho_{AB}\Vert \rho_A \otimes \rho_B) = \inf_{\sigma_B \in \mathcal{S}(\mathcal{H}_B)} D(\rho_{AB} \Vert \rho_A\otimes\sigma_{B})
\end{equation}
where we rewrote the expression to motivate the definition for sandwiched R\'enyi mutual information \cite{QFBStrong}:
\begin{equation}
  \widetilde{I}_\alpha(A;B)_\rho \equiv \inf_{\sigma_B \in\mathcal{S}(\mathcal{H}_B)}\widetilde{D}_\alpha(\rho_{AB} \Vert \rho_A \otimes \sigma_B)
\end{equation}
We then use this to define the mutual information of a channel $\mathcal{N}:\mathcal{B}(\mathcal{H}_A) \rightarrow \mathcal{B}(\mathcal{H}_B)$
\begin{equation}
  \widetilde{I}_\alpha(\mathcal{N}) \equiv \sup_{\psi_{RA}\in \mathcal{S}(H_{RA})} \widetilde{I}_\alpha(R;B)_{\mathcal{N}(\psi_{RA})}
\end{equation}
where $R$ is a copy of $A$ and $\psi_{RA}$ is a pure state \cite{QFBStrong}. We now have the language to express the following lemma \cite{N}:
\begin{lemma}
  [Cooney]
  For any classical feedback-assisted classical communication scheme over a quantum channel $\mathcal{N}$ with rate $R$ and $n$ channel uses, the probability of success is bounded   :
  \begin{equation}
    p_\operatorname{succ} \leq 2^{-n\left( \frac{\alpha-1}{\alpha}\left( R- \frac{1}{n}\widetilde{\chi}_\alpha\left(N^{\otimes n} \right) \right) \right)}
  \end{equation}
where $\alpha\in(1,2]$.
\end{lemma}
\end{comment}

\section{Strong converse}
This section is dedicated to proving the main theorem of this paper. We will need the following lemmas. The first is proven using an inequality from \cite{N} in Lemma 5 of \cite{QFBStrong} via monotonicity of $\widetilde{D}_\alpha$. The second will be used to take advantage of the separability of the quantum state observed in Section \ref{subsec:protocol}. The third is an equality between $\alpha$-Holevo information and $\alpha$-information radius. The fourth states that $\alpha$-Holevo information and $\alpha$-information radius respectively tend to the conventional Holevo information and information radius in the limit $\alpha \to 1$, which we prove in Appendix \ref{sec:pfLemma}. Note that this establishes Lemma~\ref{chiK} as a generalization of the equality $\chi(\mathcal{N}) = K(\mathcal{N})$ \cite{OPW,SBW,SW02}.
\begin{lemma} 
   \label{Nagaoka}
  Let $\alpha>1$, $\rho,\sigma\in \mathcal{S}(\mathcal{H})$, and $\Lambda$ be such that $0 \leq \Lambda \leq I$. Let
  \begin{equation}
    p \equiv \Tr\left[ \Lambda \rho \right],
    \,\,\, q\equiv \Tr\left[ \Lambda\sigma \right].
  \end{equation}
  Then
  \begin{equation}
    \widetilde{D}_\alpha(\rho\Vert\sigma) \geq \frac{1}{\alpha-1}\log\left[ p^\alpha q^{1-\alpha} \right] .
  \end{equation} 
\end{lemma}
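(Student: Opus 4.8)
The plan is to reduce the quantum inequality to a purely classical statement by applying the data-processing inequality for $\widetilde{D}_\alpha$ (which the preliminaries guarantee holds for $\alpha > 1$) to the two-outcome measurement channel associated with $\{\Lambda, I - \Lambda\}$, and then to discard a manifestly nonnegative contribution. Concretely, I would introduce the measurement channel
\[
\mathcal{M}(\omega) \equiv \Tr[\Lambda \omega]\,\state{0} + \Tr[(I-\Lambda)\omega]\,\state{1},
\]
which sends any state to a classical distribution over two outcomes. Under $\mathcal{M}$ we have $\rho \mapsto p\,\state{0} + (1-p)\,\state{1}$ and $\sigma \mapsto q\,\state{0} + (1-q)\,\state{1}$, so monotonicity gives
\[
\widetilde{D}_\alpha(\rho\Vert\sigma) \geq \widetilde{D}_\alpha(\mathcal{M}(\rho)\Vert\mathcal{M}(\sigma)).
\]

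Next, since $\mathcal{M}(\rho)$ and $\mathcal{M}(\sigma)$ are simultaneously diagonal, plugging them into the defining formula for $\widetilde{D}_\alpha$ collapses the sandwiched expression to the classical Rényi divergence of two Bernoulli distributions:
\[
\widetilde{D}_\alpha(\mathcal{M}(\rho)\Vert\mathcal{M}(\sigma)) = \frac{1}{\alpha-1}\log\!\left[ p^\alpha q^{1-\alpha} + (1-p)^\alpha (1-q)^{1-\alpha}\right].
\]
Because $\alpha > 1$ forces the prefactor $\tfrac{1}{\alpha-1}$ to be positive and the logarithm to be increasing, I may drop the second (nonnegative) summand inside the bracket to conclude
\[
\widetilde{D}_\alpha(\rho\Vert\sigma) \geq \frac{1}{\alpha-1}\log\!\left[ p^\alpha q^{1-\alpha}\right],
\]
which is exactly the claimed bound.

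The arithmetic is routine, so the only part demanding real care is the treatment of the boundary cases, and this is where I expect the main friction to lie. I would need to check that the classical reduction remains valid, and that the inequality holds, when $p$, $q$, $1-p$, or $1-q$ vanish or when the support condition defining $\widetilde{D}_\alpha$ fails. The key observation making these cases consistent is that if $q = \Tr[\Lambda\sigma] = 0$ while $p > 0$, then $\Lambda$ has support where $\sigma$ does not but $\rho$ does, forcing $\mathrm{supp}(\rho) \not\subseteq \mathrm{supp}(\sigma)$; in that event both sides equal $+\infty$ (the right side through $q^{1-\alpha} = 0^{1-\alpha} = +\infty$ since $1-\alpha < 0$), so the inequality holds trivially. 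Verifying that every such degenerate configuration is handled correctly by the $+\infty$ conventions in the definition of $\widetilde{D}_\alpha$ is the step I would write out most carefully.
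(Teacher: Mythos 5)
Your proof is correct and follows essentially the same route as the paper's cited argument: the paper states that this lemma is proven ``via monotonicity of $\widetilde{D}_\alpha$'' (in \cite{N} and Lemma 5 of \cite{WWY13}), which is exactly your reduction to the binary measurement channel $\{\Lambda, I-\Lambda\}$ followed by dropping the nonnegative second term of the classical R\'enyi divergence. Your handling of the degenerate case $q=0<p$ (forcing $\operatorname{supp}(\rho)\not\subseteq\operatorname{supp}(\sigma)$ so both sides are $+\infty$) is also sound.
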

\begin{lemma}
[\cite{King,H06}] \label{King}
Let $P_{AB}$ be a positive semidefinite separable operator. Such an operator can be written in the following form:
\begin{equation}
P_{AB}=\sum_{j}C_{A}^{j}\otimes D_{B}^{j},
\end{equation}
where $C_{A}^{j},D_{B}^{j}\geq0$ for all $j$. Let $P_{B}=\operatorname{Tr}%
_{A}\left\{  P_{AB}\right\}  $ and let $\mathcal{M}_A$ be a completely positive
linear map acting on the $A$ system. Then, for all $\alpha\geq1$,%
\begin{equation}
\left\Vert \left(  \mathcal{M}_{A}\otimes\operatorname{id}_{B}\right)  \left(
P_{AB}\right)  \right\Vert _{\alpha}\leq\nu_{\alpha}(  \mathcal{M}%
_{A})  \cdot\left\Vert P_{B}\right\Vert _{\alpha},
\end{equation}
where $\nu_{\alpha}(  \mathcal{M}_{A})  $ is the $1\rightarrow
\alpha$ norm of $\mathcal{M}_A$, defined as%
\begin{equation}
  \nu_{\alpha}(  \mathcal{M}_A)  \equiv\sup_{X\neq0, X \in \mathcal{B}(\mathcal{H}_A)}\frac{\left\Vert
\mathcal{M}_A\left(  X\right)  \right\Vert _{\alpha}}{\left\Vert X\right\Vert
_{1}}.
\end{equation}
\end{lemma}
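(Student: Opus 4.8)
The plan is to prove the bound by duality for the Schatten $\alpha$-norm, exploiting that $P_{AB}$ is separable and that $\mathcal{M}_A$ is completely positive, so that positivity is available at every step. Write $\beta$ for the Hölder conjugate of $\alpha$ (i.e. $1/\alpha+1/\beta=1$), and set $W_{AB}:=\left(\mathcal{M}_A\otimes\operatorname{id}_B\right)\!\left(P_{AB}\right)=\sum_j\mathcal{M}_A(C_A^j)\otimes D_B^j$. Since $\mathcal{M}_A$ is CP and each $C_A^j,D_B^j\geq0$, the operator $W_{AB}$ is positive semidefinite, so its $\alpha$-norm has the variational characterization
\[
\Vert W_{AB}\Vert_\alpha=\sup\left\{\Tr[W_{AB}Z_{AB}]:Z_{AB}\geq0,\ \Vert Z_{AB}\Vert_\beta\leq1\right\}.
\]
I would fix a near-optimal $Z_{AB}\geq0$, expand $\Tr[W_{AB}Z_{AB}]=\sum_j\Tr[(\mathcal{M}_A(C_A^j)\otimes D_B^j)Z_{AB}]$, and bring in $\nu_\alpha(\mathcal{M}_A)$. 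The definition gives directly $\Vert\mathcal{M}_A(C_A^j)\Vert_\alpha\leq\nu_\alpha(\mathcal{M}_A)\,\Vert C_A^j\Vert_1=\nu_\alpha(\mathcal{M}_A)\,\Tr[C_A^j]$; equivalently, passing to the Hilbert--Schmidt adjoint, $\nu_\alpha(\mathcal{M}_A)=\Vert\mathcal{M}_A^\dagger\Vert_{\beta\to\infty}$, so $\mathcal{M}_A^\dagger(Y_A)\leq\nu_\alpha(\mathcal{M}_A)\Vert Y_A\Vert_\beta\,I_A$ for $Y_A\geq0$, and $(\mathcal{M}_A^\dagger\otimes\operatorname{id}_B)(Z_{AB})\geq0$ because $\mathcal{M}_A^\dagger$ is again CP. Rewriting each term as $\Tr[(C_A^j\otimes D_B^j)(\mathcal{M}_A^\dagger\otimes\operatorname{id}_B)(Z_{AB})]$ keeps all factors positive and is the step in which the map $\mathcal{M}_A$ is ``spent.''

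The heart of the argument is the recombination over $j$, and \textbf{this is where I expect the main obstacle to lie}: I must combine the terms so that the bound sees $\Vert P_B\Vert_\alpha=\Vert\sum_j\Tr[C_A^j]\,D_B^j\Vert_\alpha$ rather than the weaker $\sum_j\Tr[C_A^j]\,\Vert D_B^j\Vert_\alpha$ that a naive termwise Hölder estimate produces; by the triangle inequality the latter is larger and yields only a useless bound. The difficulty is genuine and not a matter of bookkeeping: if one relaxes $C_A^j\leq\Tr[C_A^j]\,I_A$ and reduces everything to the marginal $P_B$ through a partial trace, a spurious dimension factor $d_A^{1/\alpha}$ appears, because the partial trace is \emph{not} a contraction for the $\beta$-norm when $\beta>1$. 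Separability of $P_{AB}$ must therefore be used without discarding the correlation between the $A$- and $B$-parts.

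To close this step I would argue by complex interpolation between the two endpoints, where the inequality is elementary and dimension-free. At $\alpha=1$ it is just $\Tr[W_{AB}]=\sum_j\Tr[\mathcal{M}_A(C_A^j)]\,\Tr[D_B^j]\leq\nu_1(\mathcal{M}_A)\sum_j\Tr[C_A^j]\,\Tr[D_B^j]=\nu_1(\mathcal{M}_A)\,\Tr[P_B]$. At $\alpha=\infty$ one has the clean operator inequality
\[
W_{AB}=\sum_j\mathcal{M}_A(C_A^j)\otimes D_B^j\ \leq\ \nu_\infty(\mathcal{M}_A)\sum_j\Tr[C_A^j]\,I_A\otimes D_B^j\ =\ \nu_\infty(\mathcal{M}_A)\,I_A\otimes P_B,
\]
using $\mathcal{M}_A(C_A^j)\leq\nu_\infty(\mathcal{M}_A)\Tr[C_A^j]\,I_A$ and $D_B^j\geq0$, whence $\Vert W_{AB}\Vert_\infty\leq\nu_\infty(\mathcal{M}_A)\Vert P_B\Vert_\infty$ with no dimension factor. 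A Stein-type interpolation of the associated analytic family in the strip then produces the general $\alpha\geq1$; setting up this family so that $\nu_\alpha(\mathcal{M}_A)$ and $\Vert P_B\Vert_\alpha$ emerge with the correct exponents is the technical point requiring care. As an alternative (self-dual) route I would note the reformulation obtained by writing $\sigma_j:=\mathcal{M}_A\!\left(C_A^j/\Tr[C_A^j]\right)$, so that $\Vert\sigma_j\Vert_\alpha\leq\nu_\alpha(\mathcal{M}_A)$ and the claim reduces to the structural estimate $\Vert\sum_j w_j\,\sigma_j\otimes D_B^j\Vert_\alpha\leq(\max_j\Vert\sigma_j\Vert_\alpha)\,\Vert\sum_j w_j D_B^j\Vert_\alpha$ with $w_j=\Tr[C_A^j]$; factoring each positive summand as $(A_j\otimes B_j)(A_j\otimes B_j)^\dagger$ and assembling $K=\sum_j(A_j\otimes B_j)\langle j\vert$ rewrites $\Vert W_{AB}\Vert_\alpha=\Vert K\Vert_{2\alpha}^2$, recasting the obstacle as a $2\alpha$-norm comparison with the $B$-only operator $\sum_j B_j\langle j\vert$. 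I would present the endpoint-plus-interpolation argument as the main proof and cite \cite{King,H06} for the detailed interpolation.
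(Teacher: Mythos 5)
The paper does not prove this lemma at all --- it is imported verbatim from the cited references \cite{King,H06} --- so the only question is whether your blind argument would actually establish it. Your two endpoint estimates are correct and dimension-free: at $\alpha=1$ the bound is just $\Tr[W_{AB}]\leq\nu_1(\mathcal{M}_A)\Tr[P_B]$, and at $\alpha=\infty$ the operator inequality $W_{AB}\leq\nu_\infty(\mathcal{M}_A)\,I_A\otimes P_B$ is valid because $\mathcal{M}_A(C_A^j)\leq\nu_\infty(\mathcal{M}_A)\Tr[C_A^j]I_A$ and the $D_B^j$ are positive. But the interpolation step, which is where the entire content of the lemma lives, is not carried out, and I do not believe it can be carried out in the form you describe. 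First, there is no linear map taking $P_B$ to $W_{AB}$ (the latter depends on the chosen separable decomposition, not on the marginal), so there is no operator to which Stein interpolation applies without constructing an ad hoc analytic family; you acknowledge this but do not resolve it. Second, and more fundamentally, even a successful interpolation between your two endpoint inequalities would produce the constant $\nu_1(\mathcal{M}_A)^{1/\alpha}\,\nu_\infty(\mathcal{M}_A)^{1-1/\alpha}$, which by (noncommutative) Riesz--Thorin satisfies $\nu_\alpha(\mathcal{M}_A)\leq\nu_1(\mathcal{M}_A)^{1/\alpha}\nu_\infty(\mathcal{M}_A)^{1-1/\alpha}$ and is in general strictly larger. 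The lemma with this weaker constant would not suffice for Theorem \oldref{tehThm}: the proof there needs the per-round penalty to be exactly $\frac{\alpha}{\alpha-1}\log\nu_\alpha\bigl(\Theta_{\sigma^{(1-\alpha)/\alpha}}\circ\mathcal{N}\bigr)$ so that the infimum over $\sigma$ collapses to $\widetilde{K}_\alpha(\mathcal{N})=\widetilde{\chi}_\alpha(\mathcal{N})$.

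The route you relegate to an aside is the one that actually works. By spectrally decomposing each $C_A^j$ you may assume without loss of generality that every $C_A^j$ is rank one, $C_A^j=c_j\state{u_j}$, so that after applying $\mathcal{M}_A$ the claim reduces to a structural estimate of the form $\bigl\Vert\sum_j w_j\,\sigma_j\otimes D_B^j\bigr\Vert_\alpha\leq\bigl(\max_j\Vert\sigma_j\Vert_\alpha\bigr)\bigl\Vert\sum_j w_j D_B^j\bigr\Vert_\alpha$ --- precisely the comparison you identify as the obstacle. Writing the left-hand operator as $KK^\dagger$ with $K=\sum_j\bigl(\ket{v_j}\otimes\sqrt{w_jD_B^j}\bigr)\bra{j}$ and passing to $K^\dagger K$, one finds that $K^\dagger K$ is the Schur (entrywise) product of the corresponding $B$-only Gram operator with the correlation matrix $[\langle v_k\vert v_j\rangle]$; Schur multiplication by a positive semidefinite matrix with unit diagonal is a unital, trace-preserving completely positive map and hence contracts every Schatten $\alpha$-norm for $\alpha\geq1$. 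This (or an equivalent Lieb--Thirring argument, as in \cite{King,H06}) closes the gap with the sharp constant $\nu_\alpha$, with no interpolation needed. As written, your proposal correctly diagnoses the difficulty but does not surmount it.
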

\begin{lemma}
  [\cite{WWY13}]  \label{chiK}
  For $\alpha > 1$, the $\alpha$-Holevo information and the $\alpha$-information radius are the same:
  \begin{equation}
    \widetilde{\chi}_\alpha(\mathcal{N}) = \widetilde{K}_\alpha(\mathcal{N})
    \label{}.
  \end{equation}
\end{lemma}

\begin{proof}
This statement was essentially proved as Lemma 14 of 
\cite{WWY13}, but only for the interval $\alpha \in (1,2]$. To get the statement for all $\alpha>1$, we actually need to extend Lemma 14 in \cite{WWY13} slightly. This follows from the proof given there and the observations that for all $\alpha>1$, $x^{\left(  1-\alpha\right)/\alpha}$ is operator convex, Tr$\left\{  x^{\alpha}\right\}$ is convex, and $\widetilde{D}_{\alpha}$ is jointly quasi-convex.

For completeness and convenience, we also give a full proof here, following the steps in the proof of Lemma 14 in \cite{WWY13} closely. We first prove the inequality $\widetilde{K}_{\alpha}\left(  \mathcal{N}%
\right)  \leq\widetilde{\chi}_{\alpha}\left(  \mathcal{N}\right)  $
for $\alpha>1$. Defining  $\widetilde{Q}_{\alpha}(\rho\Vert \sigma) \equiv \Tr((\sigma^{(1-\alpha)/2\alpha}\rho \sigma^{(1-\alpha)/2\alpha})^\alpha)$, consider that%
\begin{align}
\widetilde{K}_{\alpha}\left(  \mathcal{N}\right)    & =\inf_{\sigma}\sup
_{\rho}\widetilde{D}_{\alpha}\left(  \mathcal{N}(  \rho)
\Vert\sigma\right)  \\
& =\inf_{\sigma}\sup_{\rho}\frac{1}{\alpha-1}\log\widetilde{Q}_{\alpha}\left(
\mathcal{N}(  \rho)  \Vert\sigma\right)  \\
& =\frac{1}{\alpha-1}\log\inf_{\sigma}\sup_{\rho}\widetilde{Q}_{\alpha}\left(
\mathcal{N}(  \rho)  \Vert\sigma\right)
\end{align}
So now we focus on the $\widetilde{Q}_{\alpha}$ quantity and find that%
\begin{align}
\inf_{\sigma}\sup_{\rho}\widetilde{Q}_{\alpha}\left(  \mathcal{N}\left(
\rho\right)  \Vert\sigma\right)    & \leq\inf_{\sigma}\sup_{\mu}\int
d\mu(  \rho)  \ \widetilde{Q}_{\alpha}\left(  \mathcal{N}\left(
\rho\right)  \Vert\sigma\right)   \\
& =\sup_{\mu}\inf_{\sigma}\int d\mu(  \rho)  \ \widetilde
{Q}_{\alpha}\left(  \mathcal{N}(  \rho)  \Vert\sigma\right)  \\
& =\sup_{\left\{  p_{X}(  x)  ,\rho_{x}\right\}  }\inf_{\sigma}%
\sum_{x}p_{X}(  x)  \widetilde{Q}_{\alpha}\left(  \mathcal{N}%
\left(  \rho_{x}\right)  \Vert\sigma\right)  \\
& =\sup_{\left\{  p_{X}(  x)  ,\rho_{x}\right\}  }\inf_{\sigma_{B}%
}\widetilde{Q}_{\alpha}\left(  \rho_{XB}\Vert\rho_{X}\otimes\sigma_{B}\right)
\label{eq:K<=chi}
\end{align}
The first inequality follows by taking a supremum over all probability
measures $\mu$ on the set of all states $\rho$. The first equality is a result of
applying the Sion minimax theorem \cite{S58}---we can do so because the function $\int
d\mu(  \rho)  \ \widetilde{Q}_{\alpha}\left(  \mathcal{N}\left(
\rho\right)  \Vert\sigma\right)  $ is linear in the probability measure $\mu$
and convex in states $\sigma$. Convexity of $\widetilde{Q}_{\alpha}\left(
\mathcal{N}(  \rho)  \Vert\sigma\right)  $ in $\sigma$ follows
because
\begin{equation}
\widetilde{Q}_{\alpha}\left(  \mathcal{N}(  \rho)  \Vert
\sigma\right)  =\text{Tr}\left\{  \left(  \left[  \mathcal{N}\left(
\rho\right)  \right]  ^{1/2}\sigma^{\left(  1-\alpha\right)  /\alpha}\left[
\mathcal{N}(  \rho)  \right]  ^{1/2}\right)  ^{\alpha}\right\}  ,
\end{equation}
$x^{\left(  1-\alpha\right)  /\alpha}$ is operator convex for $\alpha >1$
and $\Tr(x^{\alpha})$ is convex for $\alpha > 1$. The second equality
follows by an application of the Fenchel-Eggleston-Caratheodory theorem
(see \cite{EK12}, for example):\ the
function $\widetilde{Q}_{\alpha}\left(  \mathcal{N}(  \rho)
\Vert\sigma\right)  $ is continuous in $\rho$, which is a density operator
acting on a $d$-dimensional Hilbert space, so that for each $\mu$, there exists
a probability distribution $p_{X}(  x)  $ on no more than $d^{2}$
letters such that%
\begin{equation}
\int d\mu(  \rho)  \ \widetilde{Q}_{\alpha}\left(  \mathcal{N}%
(  \rho)  \Vert\sigma\right)  =\sum_{x}p_{X}(  x)
\widetilde{Q}_{\alpha}\left(  \mathcal{N}\left(  \rho_{x}\right)  \Vert
\sigma\right)  .
\end{equation}
The last equality in \eqref{eq:K<=chi} follows from the properties of $\widetilde{Q}_{\alpha
}$ and by defining%
\begin{equation}
\rho_{XB}\equiv\sum_{x}p_{X}(  x)  \left\vert x\right\rangle
\left\langle x\right\vert _{X}\otimes\left[  \mathcal{N}\left(  \rho
_{x}\right)  \right]  _{B}.
\end{equation}
So we can then conclude that%
\begin{align}
\widetilde{K}_{\alpha}\left(  \mathcal{N}\right)    & \leq\frac{1}{\alpha
-1}\log\sup_{\left\{  p_{X}(  x)  ,\rho_{x}\right\}  }\inf
_{\sigma_{B}}\widetilde{Q}_{\alpha}\left(  \rho_{XB}\Vert\rho_{X}\otimes
\sigma_{B}\right)  \\
& =\sup_{\left\{  p_{X}(  x)  ,\rho_{x}\right\}  }\inf_{\sigma_{B}%
}\frac{1}{\alpha-1}\log\widetilde{Q}_{\alpha}\left(  \rho_{XB}\Vert\rho
_{X}\otimes\sigma_{B}\right)  \\
& =\widetilde{\chi}_{\alpha}\left(  \mathcal{N}\right)  .
\end{align}

The proof of the other inequality $\widetilde{K}_{\alpha}\left(
\mathcal{N}\right)  \geq\widetilde{\chi}_{\alpha}\left(  \mathcal{N}\right)
$\ is simpler. Consider that
\begin{align}
\widetilde{\chi}_{\alpha}\left(  \mathcal{N}\right)   &  =\sup_{\left\{
p_{X}(  x)  ,\rho_{x}\right\}  }\inf_{\sigma_B}\widetilde{D}_{\alpha
}\left(  \rho_{XB}\Vert\rho_{X}\otimes\sigma_B\right)  \\
&  \leq\sup_{\left\{  p_{X}(  x)  ,\rho_{x}\right\}  }\widetilde
{D}_{\alpha}\left(  \rho_{XB}\Vert\rho_{X}\otimes\sigma\right)  \\
&  \leq\sup_{\left\{  p_{X}(  x)  ,\rho_{x}\right\}  }\sup_{x}\widetilde{D}_{\alpha}\left(  \left\vert x\right\rangle
\left\langle x\right\vert \otimes\mathcal{N}\left(  \rho_{x}\right)
\Vert\left\vert x\right\rangle \left\langle x\right\vert \otimes\sigma\right)
\\
&  =\sup_{\left\{  p_{X}(  x)  ,\rho_{x}\right\}  }\sup_{x}\widetilde{D}_{\alpha}\left(  \mathcal{N}\left(  \rho_{x}\right)
\Vert\sigma\right)  \\
&  \leq\sup_{\rho}\widetilde{D}_{\alpha}\left(  \mathcal{N}\left(
\rho\right)  \Vert\sigma\right)  .
\end{align}
The second inequality follows from joint quasi-convexity of $\widetilde
{D}_{\alpha}$ for $\alpha > 1$ \cite{FL13,B13monotone}, where a function $f$ is jointly quasi-convex if
\begin{equation}
  f(\lambda x_1 + (1-\lambda) x_2, \lambda y_1 + (1-\lambda) y_2) \le \max\left\{ f(x_1, y_1), f(x_2, y_2) \right\}.
\end{equation}
Since the above
inequality holds for all states $\sigma$, we can conclude that $\widetilde
{K}_{\alpha}\left(  \mathcal{N}\right)  \geq\widetilde{\chi}_{\alpha}\left(
\mathcal{N}\right)  $. 
\end{proof}

\begin{lemma}  \label{chiLims}
  For a quantum channel $\mathcal{N}$, the following limits hold:
  \begin{equation}
    \lim_{\alpha\rightarrow 1} \widetilde{\chi}_\alpha(\mathcal{N}) = \chi(\mathcal{N})
    \label{}
  \end{equation}
  and
  \begin{equation}
    \lim_{\alpha\rightarrow 1} \widetilde{K}_\alpha(\mathcal{N}) = K(\mathcal{N}),
    \label{}
  \end{equation}
  where
  \begin{equation}
    K(\mathcal{N}) \equiv \inf_{\sigma \in \mathcal{S}(\mathcal{H})} \sup_{\rho \in \mathcal{S}(\mathcal{H})} D (\mathcal{N}(\rho) \Vert \sigma)
    \label{}
  \end{equation}
  is the information radius.
\end{lemma}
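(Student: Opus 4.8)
The plan is to reduce both limits to a single, more tractable statement about the $\alpha$-information radius, and then to settle that statement with a monotonicity-plus-Dini argument. The key observation is that the $\widetilde{\chi}_\alpha$ limit comes for free once the $\widetilde{K}_\alpha$ limit is in hand: by Lemma~\ref{chiK} we have $\widetilde{\chi}_\alpha(\mathcal{N}) = \widetilde{K}_\alpha(\mathcal{N})$ for every $\alpha > 1$, and the classical identity $\chi(\mathcal{N}) = K(\mathcal{N})$ cited just above lets us conclude $\lim_{\alpha\to 1^+}\widetilde{\chi}_\alpha(\mathcal{N}) = \lim_{\alpha\to 1^+}\widetilde{K}_\alpha(\mathcal{N}) = K(\mathcal{N}) = \chi(\mathcal{N})$. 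I would therefore focus entirely on showing $\lim_{\alpha\to 1^+}\widetilde{K}_\alpha(\mathcal{N}) = K(\mathcal{N})$. Throughout I take the one-sided limit $\alpha\downarrow 1$, which is the only case needed for the bound in \eqref{tehEqn}; it is also the natural one, since $\widetilde{K}_\alpha$ is monotone non-decreasing in $\alpha$, so that $\lim_{\alpha\to 1^+}\widetilde{K}_\alpha(\mathcal{N}) = \inf_{\alpha>1}\widetilde{K}_\alpha(\mathcal{N})$ exists.

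For the lower bound I would use that $\alpha\mapsto\widetilde{D}_\alpha$ is monotone non-decreasing and converges to $D$ as $\alpha\to 1$, so that $\widetilde{D}_\alpha(\rho\Vert\sigma)\geq D(\rho\Vert\sigma)$ for all $\alpha>1$. Taking $\sup_\rho$ and then $\inf_\sigma$ preserves this inequality, giving $\widetilde{K}_\alpha(\mathcal{N})\geq K(\mathcal{N})$ for every $\alpha>1$, and hence $\lim_{\alpha\to 1^+}\widetilde{K}_\alpha(\mathcal{N})\geq K(\mathcal{N})$.

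The substantive direction is the matching upper bound. First I would argue that, after restricting the output system to the span of $\{\operatorname{supp}(\mathcal{N}(\rho))\}_\rho$, the minimization defining $K(\mathcal{N})$ is attained (by compactness of the state space and lower semicontinuity of $D$) at some $\sigma^*$ of full support, with $K(\mathcal{N}) = \sup_\rho D(\mathcal{N}(\rho)\Vert\sigma^*) < \infty$. Plugging this fixed $\sigma^*$ into the definition of $\widetilde{K}_\alpha$ yields $\widetilde{K}_\alpha(\mathcal{N})\leq\sup_\rho\widetilde{D}_\alpha(\mathcal{N}(\rho)\Vert\sigma^*)$, so it suffices to prove that $\lim_{\alpha\to 1^+}\sup_\rho\widetilde{D}_\alpha(\mathcal{N}(\rho)\Vert\sigma^*) = \sup_\rho D(\mathcal{N}(\rho)\Vert\sigma^*)$, i.e. to interchange the limit with the supremum over input states.

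This interchange is the crux, and I expect it to be the main obstacle. I would handle it with Dini's theorem. Because $\sigma^*$ has full support, the map $\rho\mapsto\widetilde{D}_\alpha(\mathcal{N}(\rho)\Vert\sigma^*)$ is continuous for each fixed $\alpha$ near $1$ (the operators $\sigma^{*(1-\alpha)/(2\alpha)}$ are bounded and $\mathcal{N}$ is continuous); the pointwise limit $\rho\mapsto D(\mathcal{N}(\rho)\Vert\sigma^*)$ is continuous (in finite dimension the von Neumann entropy is continuous and $\log\sigma^*$ is bounded); and along any sequence $\alpha_n\downarrow 1$ these functions decrease monotonically to the limit. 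Since $\rho$ ranges over the compact set $\mathcal{S}(\mathcal{H}_A)$, Dini's theorem gives uniform convergence, which legitimizes exchanging the limit and the supremum and completes the upper bound. The only points requiring care are the full-support reduction for $\sigma^*$, so that all relative entropies in sight are finite and continuous, and the verification of the monotone pointwise convergence hypotheses of Dini; both follow from the properties of $\widetilde{D}_\alpha$ already recorded in the Preliminaries.
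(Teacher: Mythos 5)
Your proposal is correct in its essentials, but it takes a genuinely different route from the paper. The paper's proof works through a minimax exchange lemma (Lemma~\ref{topStuff}, from Mosonyi--Hiai): for the limit from above it keeps the infimum over $\sigma$ in place, perturbs $\sigma$ to $\sigma(\varepsilon)=(1-\varepsilon)\sigma+\varepsilon\pi$ so that upper semicontinuity in $\rho$ holds, swaps $\inf_{\alpha>1}$ with $\sup_\rho$, and then removes the perturbation at the cost of a $-\log(1-\varepsilon)$ term that vanishes as $\varepsilon\to 0$. You instead fix an \emph{attained} minimizer $\sigma^*$ of $K(\mathcal{N})$, argue it has full support on the span of the output supports (otherwise some averaged input would force $\sup_\rho D(\mathcal{N}(\rho)\Vert\sigma^*)=+\infty$, contradicting $K(\mathcal{N})<\infty$), and then justify the exchange of $\lim_{\alpha\downarrow 1}$ with $\sup_\rho$ by Dini's theorem, using monotonicity in $\alpha$, continuity in $\rho$ for positive definite $\sigma^*$, and compactness of the state space. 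Both arguments lean on exactly the same three ingredients (compactness, semicontinuity/continuity in the state variable, monotonicity in $\alpha$), so neither is more general, but yours avoids invoking the abstract minimax lemma and the $\varepsilon$-perturbation bookkeeping, at the price of having to verify attainment and full support of $\sigma^*$ and the support-restriction step for $K(\mathcal{N})$, which you gloss over but which are routine in finite dimensions. Your reduction of the $\widetilde{\chi}_\alpha$ limit to the $\widetilde{K}_\alpha$ limit via Lemma~\ref{chiK} and $\chi(\mathcal{N})=K(\mathcal{N})$ is also sound and is essentially what the paper does for the limit from above.

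One caveat: as you acknowledge, you only treat $\alpha\downarrow 1$, whereas the lemma as stated asserts the two-sided limit and the paper proves $\alpha\nearrow 1$ as well (by a separate application of the minimax lemma, using lower semicontinuity in $\sigma$ rather than continuity in $\rho$, since Lemma~\ref{chiK} is unavailable for $\alpha<1$). This is fine for the application to Theorem~\ref{tehThm}, which needs only $\inf_{\alpha>1}\widetilde{\chi}_\alpha(\mathcal{N})=\chi(\mathcal{N})$, but your argument does not by itself establish the full statement of Lemma~\ref{chiLims}.
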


\noindent We now state the theorem.
\begin{theorem}  \label{tehThm}
  Given any $n$-round protocol for classical feedback-assisted classical communication over an entanglement-breaking channel $\mathcal{N}$ with rate $R$, the average probability of success is bounded from above by an exponential in $n$:
  \begin{equation}
    p_{\operatorname{succ}} \leq 2^{ -n\sup_{\alpha>1} \frac{\alpha-1}{\alpha} \left( R- \widetilde{\chi}_\alpha(\mathcal{N}) \right) } , \label{eq:main-bound-1}
  \end{equation}
where $\widetilde{\chi}_\alpha$ is the $\alpha$-Holevo information. The same bound holds for an arbitrary channel $\mathcal{N}$ given that the encoder does not entangle inputs across different uses of the channel.
\end{theorem}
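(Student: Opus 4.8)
The plan is to combine the hypothesis-testing reduction of Lemma~\ref{Nagaoka} with a round-by-round ``peeling'' of the channel that exploits the separability established in Section~\ref{subsec:protocol}, and finally to optimize over the R\'enyi parameter. First I would reduce the success probability to a single sandwiched R\'enyi divergence. Let $\rho_{M\mathbf{B}}$ denote the true classical--quantum state at the end of the protocol, with $M$ the uniformly distributed message register and $\mathbf{B}=B_nB_{n-1}'$ the systems Bob measures, and let $\sigma_{M\mathbf{B}}$ denote the state obtained by replacing every one of the $n$ channel uses with the replacement channel $\mathcal{R}_\omega$ for a fixed state $\omega$. Taking $\Lambda=\sum_m\ket{m}\bra{m}_M\otimes D^m$, we have $p=\Tr[\Lambda\rho_{M\mathbf{B}}]=p_{\operatorname{succ}}$, while in the replaced scenario Bob's systems carry no information about $m$ (each channel outputs $\omega$ regardless of its input and all feedback is $m$-independent), so $q=\Tr[\Lambda\sigma_{M\mathbf{B}}]=2^{-nR}$. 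Lemma~\ref{Nagaoka} then yields
\begin{equation}
p_{\operatorname{succ}}\leq 2^{-\frac{\alpha-1}{\alpha}\left(nR-\widetilde{D}_\alpha(\rho_{M\mathbf{B}}\Vert\sigma_{M\mathbf{B}})\right)},
\end{equation}
so it remains to show $\widetilde{D}_\alpha(\rho_{M\mathbf{B}}\Vert\sigma_{M\mathbf{B}})\leq n\,\widetilde{\chi}_\alpha(\mathcal{N})$ for the optimal $\omega$, and then to take the supremum over $\alpha>1$.

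Next I would prove this additive bound by induction on the number of rounds, peeling off the last channel use at each step. Writing $s=(1-\alpha)/(2\alpha)$ and using $\widetilde{D}_\alpha(\rho\Vert\sigma)=\tfrac{\alpha}{\alpha-1}\log\Vert\sigma^{s}\rho\,\sigma^{s}\Vert_\alpha$, the divergence is expressed as a Schatten $\alpha$-norm. Data processing lets me discard the local encoding and decoding maps of that round (the same channels in the true and replaced runs), leaving only the channel use to handle. The state just before the last channel is separable across the Alice:Bob cut by the observation in Section~\ref{subsec:protocol}, and choosing the reference to factor as $\omega_{B_n}\otimes(\cdots)$ across that cut lets me absorb $\omega^{s}$ into the channel, forming the completely positive map $\mathcal{M}_\omega(\cdot)\equiv\omega^{s}\mathcal{N}(\cdot)\omega^{s}$. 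Applying Lemma~\ref{King} to the separable operator with $\mathcal{M}_\omega$ as the map on Alice's side extracts the multiplicative factor $\nu_\alpha(\mathcal{M}_\omega)$ and reduces the norm to the corresponding norm one round earlier. A direct computation identifies this factor as
\begin{equation}
\nu_\alpha(\mathcal{M}_\omega)=\sup_{\rho}\Vert\omega^{s}\mathcal{N}(\rho)\omega^{s}\Vert_\alpha=2^{\frac{\alpha-1}{\alpha}\sup_\rho\widetilde{D}_\alpha(\mathcal{N}(\rho)\Vert\omega)},
\end{equation}
which, for $\omega$ chosen to attain the infimum defining $\widetilde{K}_\alpha(\mathcal{N})$, equals $2^{\frac{\alpha-1}{\alpha}\widetilde{K}_\alpha(\mathcal{N})}=2^{\frac{\alpha-1}{\alpha}\widetilde{\chi}_\alpha(\mathcal{N})}$ by Lemma~\ref{chiK}. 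Taking $\tfrac{\alpha}{\alpha-1}\log$ of the peeled inequality turns this factor into an additive contribution of exactly $\widetilde{\chi}_\alpha(\mathcal{N})$ per round; with the base case $\widetilde{D}_\alpha=0$ (no channel used, true $=$ replaced) the induction gives $\widetilde{D}_\alpha(\rho_{M\mathbf{B}}\Vert\sigma_{M\mathbf{B}})\leq n\,\widetilde{\chi}_\alpha(\mathcal{N})$. Substituting into the bound above and optimizing over $\alpha>1$ produces \eqref{eq:main-bound-1}. The arbitrary-channel case is identical, since the hypothesis that the encoder does not entangle inputs across channel uses is precisely what guarantees the Alice:Bob separability invoked in the inductive step.

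The main obstacle I anticipate is the inductive step itself: arranging the bookkeeping so that a single application of Lemma~\ref{King} along the Alice:Bob cut cleanly reduces the round-$k$ norm to the round-$(k-1)$ norm while the only multiplicative cost is $\nu_\alpha(\mathcal{M}_\omega)$. The delicate points are that the separability holds along the full Alice:Bob cut rather than along the cut separating the single channel input $A_k$ from everything else, that the reference must be chosen to factor across this cut so the $\omega^{s}$ sandwiching can be absorbed into the channel without coupling Alice's retained systems to Bob's, and that Alice's retained systems must be propagated through the recursion via data processing rather than discarded. Verifying the identity $\nu_\alpha(\mathcal{M}_\omega)=2^{\frac{\alpha-1}{\alpha}\widetilde{K}_\alpha(\mathcal{N})}$ and passing to $\widetilde{\chi}_\alpha$ via Lemma~\ref{chiK} is then routine, as is the final strong-converse conclusion: by Lemma~\ref{chiLims}, $\widetilde{\chi}_\alpha(\mathcal{N})\to\chi(\mathcal{N})$ as $\alpha\to1^{+}$, so whenever $R>\chi(\mathcal{N})$ continuity furnishes some $\alpha>1$ making the exponent in \eqref{eq:main-bound-1} strictly positive.
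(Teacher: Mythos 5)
Your proposal is correct and follows essentially the same route as the paper's proof: Lemma~\ref{Nagaoka} applied to the true versus replacement-channel final states, the Schatten-norm form of $\widetilde{D}_\alpha$, separability across the Alice:Bob cut combined with Lemma~\ref{King} to peel off one channel use per round at multiplicative cost $\nu_\alpha(\Theta_{\omega^{(1-\alpha)/\alpha}}\circ\mathcal{N})$, and Lemma~\ref{chiK} to convert the resulting $\widetilde{K}_\alpha(\mathcal{N})$ into $\widetilde{\chi}_\alpha(\mathcal{N})$. The only (harmless) imprecision is writing $q=2^{-nR}$ rather than $q=1/L$ with $\log L\geq nR$, which affects nothing since the inequality runs in the right direction.
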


\begin{proof}
  We take as a starting point the approach of Nagaoka \cite{N}, connecting hypothesis testing with data processing of a R\'enyi information quantity. Let $\mathcal{P}_n$ be such a protocol. We are bounding the average probability of success, so we assume Alice chooses her messages uniformly at random. Using the notation of Section \ref{subsec:protocol}, the state we have at the final round of the protocol $\mathcal{P}_n$ is
\begin{equation}
  \rho_{MB_n B_{n-1}'} = \frac{1}{L}\sum_{m=1}^{L} \vert m\rangle\langle m\vert_M \otimes \rho_{B_n B_{n-1}'}^m ,
\end{equation}
where $L$ is the number of possible messages. 

Following the argument in \cite{QFBStrong}, we can write the success probability as
\begin{equation}
  p_{\operatorname{succ}} =
  \frac{1}{L}\sum_m \Tr\left[ D^m_{B_n B_{n-1}'} \rho_{B_n B_{n-1}'}^m \right] = \Tr\left[ T_{M B_n B_{n-1}'}\rho_{M B_n B_{n-1}'} \right] ,
\end{equation}
where
\begin{equation}
  T_{M B_n B_{n-1}'} \equiv \sum_m \vert m\rangle\langle m\vert_M \otimes D^m_{ B_n B_{n-1}'}.
\end{equation}
Note that $0 \leq T \leq I$, so that $\left\{ T, I-T \right\}$ is a POVM.

We now consider the state $\tau_{MB_n B_{n-1}'}$ defined to be the final state if we had implemented $\mathcal{P}_n$ using a
replacement channel $\mathcal{R_\sigma}$ instead of the original channel (see Figure~\ref{fig:FB-EB-replacement}).
\begin{figure}
[ptb]
\begin{center}
\includegraphics[width=\textwidth]%
{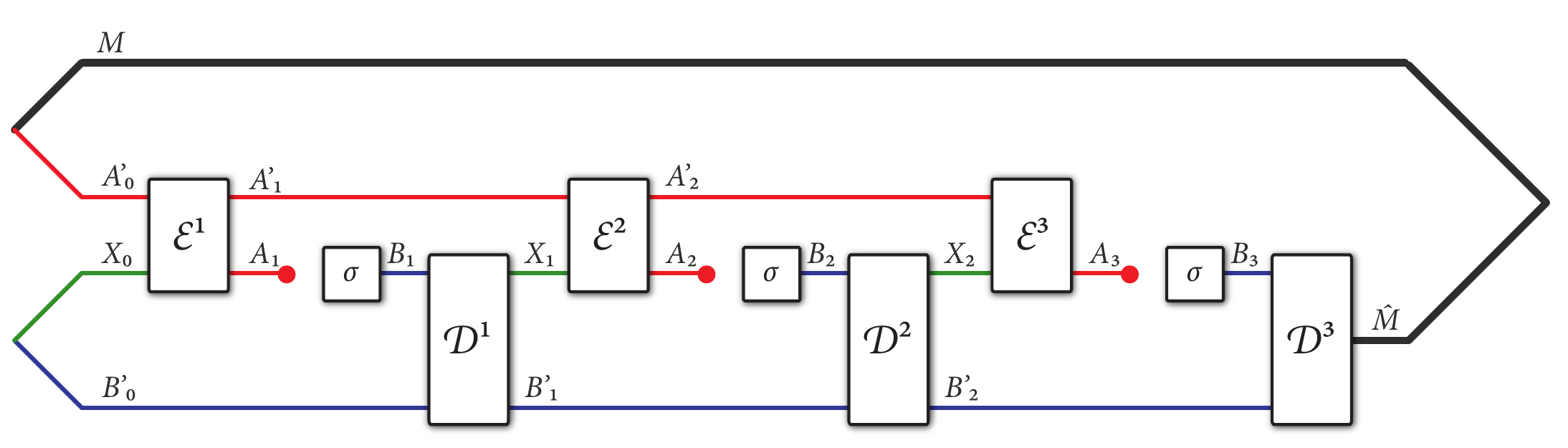}
\caption{A protocol for feedback-assisted classical communication when using the replacement channel. Notice that the communication line between $A_1$ and $\hat{M}$ is broken.}%
\label{fig:FB-EB-replacement}%
\end{center}
\end{figure}
The overall state in this alternate scenario has a simple expression since Bob's states are now independent
 of~$m$:
\begin{equation}
  \tau_{M B_n B_{n-1}'} = \tau_M \otimes \sigma_{B_n} \otimes \tau_{B_{n-1}'} ,
\end{equation}
where $\tau_M = I_M / L$. In the above, we remind that $B_n$ is a label for the output system of the $n$th replacement channel $\mathcal{R_\sigma}$, and the state of this system is equal to $\sigma$. We can then compute the following:
\begin{equation}
  \Tr\left[ T_{M B_n B_{n-1}'} \tau_{M B_n B_{n-1}'} \right] = \frac{1}{L} \sum_m \Tr\left[ D^m_{B_n B_{n-1}'} \sigma_{B_n} \otimes \tau_{B_{n-1}'} \right] = \frac{1}{L} ,
\end{equation}
where the last equality follows because $\sum_m D^m = I$ and $\Tr[\sigma_{B_n} \otimes \tau_{B_{n-1}'}]=1$. This equality is intuitive, since for a replacement channel, the receiver Bob cannot do any better than to guess the input message $m$ at random. Letting $T$ be the operator $\Lambda$ in Lemma \ref{Nagaoka}, we conclude that for $\alpha>1$,
\begin{equation}
  \widetilde{D}_\alpha(\rho_{M B_n B_{n-1}'} \Vert \tau_{M B_n B_{n-1}'}) \geq \frac{1}{\alpha-1} \log\left[
  p_{\operatorname{succ}}^\alpha \frac{1}{L^{1-\alpha}} \right], 
\end{equation}
which can be re-written as follows:
\begin{equation}
  \frac{\alpha}{\alpha-1}\log(p_{\operatorname{succ}}) + \log L \leq\widetilde{D}_\alpha(\rho_{M B_n B_{n-1}'} \Vert \tau_{M B_n B_{n-1}'}).
\end{equation}
We note that if the support of the state $\sigma$ does not contain the support of the image of the channel $\mathcal{N}_{A\rightarrow B}$, then the
$\widetilde{D}_\alpha$
upper bound is trivially equal to $+\infty$. So, in what follows, we choose $\sigma$ such that its support contains the support of the image of the channel in order to guarantee that all quantities involved in the forthcoming proof are finite.

We would therefore like to bound $\widetilde{D}_\alpha(\rho_{M B_n B_{n-1}'} \Vert \tau_{M B_n B_{n-1}'})$. To do so, consider that
\begin{multline}
\widetilde{D}_{\alpha}(\rho_{MB_{n}B_{n-1}^{\prime}}\Vert\tau_{MB_{n}%
B_{n-1}^{\prime}})   =\widetilde{D}_{\alpha}(\mathcal{N}_{A_{n}\rightarrow
B_{n}}(  \rho_{MA_{n}B_{n-1}^{\prime}})  \Vert\tau_{MB_{n-1}%
^{\prime}}\otimes\sigma_{B_{n}})\\
  =\frac{\alpha}{\alpha-1}\log\left\Vert \left(  \Theta_{\sigma_{B_{n}%
}^{\frac{1-\alpha}{\alpha}}}\circ\mathcal{N}_{A_{n}\rightarrow B_{n}}\right)
\left(  \tau_{MB_{n-1}^{\prime}}^{\left(  1-\alpha\right)  /(2\alpha)}%
\rho_{MA_{n}B_{n-1}^{\prime}}\tau_{MB_{n-1}^{\prime}}^{\left(  1-\alpha
\right)  /(2\alpha)}\right)  \right\Vert _{\alpha}
\label{normBound} ,
\end{multline}
where $\circ$ denotes function composition and in the last equality, we define
$\Theta$ by
\begin{equation}
  \Theta_\sigma(\rho) \equiv \sigma^{1/2} \rho \sigma^{1/2},
\end{equation}
 and the identity operation on the other systems is implied.
We now use the key observation from Section \ref{subsec:protocol} (and used in \cite{EBFB}) that if $\mathcal{N}$ is EB or if Alice uses separable inputs, throughout $\mathcal{P}_n$, Alice and Bob's systems are always separable. Furthermore, the $M$ system is classical, so $\rho_{MA_{n}B_{n-1}%
^{\prime}}$ is separable with respect to the $A_{n}:MB_{n-1}^{\prime}$ cut. This implies that
\begin{equation}
\tau_{MB_{n-1}^{\prime}}^{\left(  1-\alpha\right)  /(2\alpha)}\rho_{MA_{n}B_{n-1}^{\prime}}\tau_{MB_{n-1}^{\prime}}^{\left(  1-\alpha\right)/(2\alpha)}
\end{equation}
 is a positive semidefinite separable operator:
\begin{align}
\tau_{MB_{n-1}^{\prime}}^{\left(  1-\alpha\right)  /(2\alpha)}\rho
_{MA_{n}B_{n-1}^{\prime}}\tau_{MB_{n-1}^{\prime}}^{\left(  1-\alpha\right)
/(2\alpha)} &  =\tau_{MB_{n-1}^{\prime}}^{\left(  1-\alpha\right)  /(2\alpha)
}\left(  \sum_{j}p(j)\rho_{A_{n}}^{j}\otimes\rho_{MB_{n-1}^{\prime}}%
^{j}\right)  \tau_{MB_{n-1}^{\prime}}^{\left(  1-\alpha\right)  /(2\alpha)}\\
&  =\sum_{j}p(j)\rho_{A_{n}}^{j}\otimes\left(  \tau_{MB_{n-1}^{\prime}%
}^{\left(  1-\alpha\right)  /(2\alpha)}\rho_{MB_{n-1}^{\prime}}^{j}%
\tau_{MB_{n-1}^{\prime}}^{\left(  1-\alpha\right)  /(2\alpha)}\right).
\end{align}
\begin{comment}
If we let
\begin{align}
C_{A_{n}}^{j}  \equiv\rho_{A_{n}}^{j},
\end{align}
\begin{align}
D_{MB_{n-1}^{\prime}}^{j}  \equiv p(j)\tau_{MB_{n-1}^{\prime}}^{\left(
1-\alpha\right)  /(2\alpha)}\rho_{MB_{n-1}^{\prime}}^{j}\tau_{MB_{n-1}^{\prime}%
}^{\left(  1-\alpha\right)  /(2\alpha)}
\end{align}
so that%
\begin{equation}
\tau_{MB_{n-1}^{\prime}}^{\left(  1-\alpha\right)  /(2\alpha)}\rho
_{MA_{n}B_{n-1}^{\prime}}\tau_{MB_{n-1}^{\prime}}^{\left(  1-\alpha\right)
/(2\alpha)}=\sum_{j}C_{A_{n}}^{j}\otimes D_{MB_{n-1}^{\prime}}^{j},
\end{equation}
\end{comment}
Since conjugation by a positive semidefinite operator is clearly a completely positive map, we can apply Lemma \ref{King} to conclude that
\begin{multline}
 \left\Vert \left(  \Theta_{\sigma_{B_{n}}^{\frac{1-\alpha}{\alpha}}}%
\circ\mathcal{N}_{A_{n}\rightarrow B_{n}}\right)  \left(  \tau_{MB_{n-1}%
^{\prime}}^{\left(  1-\alpha\right)  /(2\alpha)}\rho_{MA_{n}B_{n-1}^{\prime}%
}\tau_{MB_{n-1}^{\prime}}^{\left(  1-\alpha\right)  /(2\alpha)}\right)
\right\Vert _{\alpha}\\
 \leq\nu_{\alpha}\left(  \Theta_{\sigma_{B_{n}}^{\frac{1-\alpha}{\alpha}}}%
\circ\mathcal{N}_{A_{n}\rightarrow B_{n}}\right)  \cdot\left\Vert
\tau_{MB_{n-1}^{\prime}}^{\left(  1-\alpha\right)  /(2\alpha)}\rho
_{MB_{n-1}^{\prime}}\tau_{MB_{n-1}^{\prime}}^{\left(  1-\alpha\right)
/(2\alpha)}\right\Vert _{\alpha}  \label{eq:helper1} .
\end{multline}
We then have the following chain of inequalities:
\begin{align}
\widetilde{D}_{\alpha}(\rho_{MB_{n}B_{n-1}^{\prime}}\Vert\tau_{MB_{n}%
B_{n-1}^{\prime}}) 
& \leq\frac{\alpha}{\alpha-1}\log\nu_{\alpha}\!\left(
\Theta_{\sigma^{\frac{1-\alpha}{\alpha}}}\circ\mathcal{N}\right)  +\widetilde{D}_{\alpha}(\rho_{MB_{n-1}%
^{\prime}}\Vert\tau_{MB_{n-1}^{\prime}})\\
& \leq \frac{\alpha}{\alpha-1}\log\nu_{\alpha}\!\left(
\Theta_{\sigma^{\frac{1-\alpha}{\alpha}}}\circ\mathcal{N}\right)+\widetilde{D}_{\alpha}(\rho_{MB_{n-1}B_{n-2}^{\prime}}\Vert
\tau_{MB_{n-1}B_{n-2}^{\prime}})\\
& \leq n\frac{\alpha}{\alpha-1}\log\nu_{\alpha}\!\left(
\Theta_{\sigma^{\frac{1-\alpha}{\alpha}}}\circ\mathcal{N}\right)+ \widetilde{D}_{\alpha}(\rho_{MB_{0}^{\prime}}\Vert\tau_{MB_{0}^{\prime}})\\
& = n\frac{\alpha}{\alpha-1}\log\nu_{\alpha}\!\left(
\Theta_{\sigma^{\frac{1-\alpha}{\alpha}}}\circ\mathcal{N}\right)\\
& = n\frac{\alpha}{\alpha-1}\log\sup_{\rho_{A} \in \mathcal{S}(\mathcal{H}_A)}\left\Vert \left(  \Theta
_{\sigma_{B}^{\frac{1-\alpha}{\alpha}}}\circ\mathcal{N}_{A\rightarrow
B}\right)  (\rho_{A})\right\Vert _{\alpha} .
\end{align}
The first inequality follows by combining
 \eqref{normBound} and \eqref{eq:helper1}.
The second inequality follows from monotonicity of the sandwiched R\'enyi relative entropy under the partial trace channel. The third inequality follows by recognizing that $\widetilde{D}_{\alpha}(\rho_{MB_{n-1}B_{n-2}^{\prime}}\Vert\tau_{MB_{n-1}B_{n-2}^{\prime}})$ is the relative entropy at round $n-1$  of $\mathcal{P}_n$, which allows us to apply the above argument inductively. This is a crucial step of the argument and proves a form of additivity similar to that of \eqref{add}. The first equality is a consequence of the fact that $\rho_{MB_0'} = \tau_{MB_0'}$, since no channels have been applied at that point in the protocol. The last equality follows from the main result of \cite{pToqNorms} which allows us to take the supremum over quantum states instead of all operators (furthermore from the fact that quantum states have trace equal to one). Hence, we have that
\begin{align}
\frac{\alpha}{\alpha-1}\log p_{\operatorname{succ}}(\mathcal{P}_{n})+\log L
& \leq n\frac{\alpha}{\alpha-1}\log\sup_{\rho_{A} \in \mathcal{S}(\mathcal{H}_A)}\left\Vert \left(  \Theta
_{\sigma_{B}^{\frac{1-\alpha}{\alpha}}}\circ\mathcal{N}_{A\rightarrow
B}\right)  (\rho_{A})\right\Vert _{\alpha} .
\end{align}
 Hence,
we have proven that the upper bound holds for all states $\sigma_B$, so we can conclude that
\begin{align}
\frac{\alpha}{\alpha-1}\log p_{\operatorname{succ}}(\mathcal{P}_{n})+\log L
& \leq n\inf_{\sigma_{B} \in \mathcal{S}(\mathcal{H}_B)}\sup_{\rho_{A}\in \mathcal{S}(\mathcal{H}_A)}\frac{\alpha}{\alpha-1}\log\left\Vert \left(
\Theta_{\sigma_{B}^{\frac{1-\alpha}{\alpha}}}\circ\mathcal{N}_{A\rightarrow
B}\right)  (\rho_{A})\right\Vert _{\alpha}\\
& = n\widetilde{K}_{\alpha} (\mathcal{N})\\
& =n\widetilde{\chi}_{\alpha}(\mathcal{N}).
\end{align}
The first equality comes from the definition of the $\alpha$-information radius, and the second equality is due to Lemma \ref{chiK}.

Now, the protocol uses the channel $n$ times and the rate $R$ of $\mathcal{P}_n$ is defined to be the number of bits per channel use, so that $R \leq \frac{\log L}{n}$. This allows us to introduce $R$ into the inequality:
\begin{equation}
\frac{1}{n}\log p_{\operatorname{succ}}(\mathcal{P}_{n})\leq-\frac{\alpha-1}{\alpha}(R-\widetilde{\chi}_{\alpha}(\mathcal{N})).
\end{equation}
Since this is true for all $\alpha>1$, we can take a supremum over $\alpha>1$ and arrive at the bound stated
in \eqref{eq:main-bound-1}.
\let\qed\relax
\end{proof}

The strong converse itself now follows from
Theorem~\ref{tehThm} and Lemma \ref{chiLims}.

\begin{corollary}
  [Strong Converse]
The probability of success of any sequence of protocols which use an entanglement-breaking channel with classical feedback at a rate greater than the classical capacity is bounded from above by a decaying exponential. The same is true for arbitrary channels with separable inputs.
\end{corollary}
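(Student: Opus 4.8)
The plan is to follow the hypothesis-testing strategy of Nagaoka \cite{N}, comparing the genuine protocol against one in which every channel use is replaced by a constant-output channel, and then to exploit the separability structure identified in Section \ref{subsec:protocol} in order to additivize the resulting R\'enyi quantity across the $n$ channel uses.

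First I would set up the comparison. Let $\rho_{MB_nB_{n-1}'}$ be the final state of the protocol, with the message register $M$ included and messages chosen uniformly at random, and let $\tau_{MB_nB_{n-1}'}$ be the corresponding final state obtained when every use of $\mathcal{N}$ is replaced by a replacement channel $\mathcal{R}_\sigma$ for a fixed state $\sigma$. In that alternate scenario Bob's systems carry no information about $m$, so $\tau$ factorizes as a product $\tau_M \otimes \sigma_{B_n} \otimes \tau_{B_{n-1}'}$, and the decoding measurement succeeds only with probability $1/L$ against $\tau$, where $L$ is the number of messages. Assembling the decoder into a single test operator $T = \sum_m \vert m\rangle\langle m\vert_M \otimes D^m$ with $0 \leq T \leq I$, and taking $p = p_{\operatorname{succ}} = \Tr[T\rho]$ and $q = \Tr[T\tau] = 1/L$, Lemma \ref{Nagaoka} immediately yields the lower bound $\widetilde{D}_\alpha(\rho \Vert \tau) \geq \frac{\alpha}{\alpha-1}\log p_{\operatorname{succ}} + \log L$ for every $\alpha > 1$.

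The heart of the argument is then an \emph{upper} bound on $\widetilde{D}_\alpha(\rho_{MB_nB_{n-1}'} \Vert \tau_{MB_nB_{n-1}'})$ that is additive in $n$. I would write the sandwiched R\'enyi relative entropy in its Schatten-norm form $\widetilde{D}_\alpha(\cdot\Vert\cdot) = \frac{\alpha}{\alpha-1}\log\Vert \cdot \Vert_\alpha$ and peel off the final channel use. Since $\tau$ is a product across $MB_{n-1}'$ and $B_n$, and since $\rho_{MB_nB_{n-1}'} = \mathcal{N}_{A_n\to B_n}(\rho_{MA_nB_{n-1}'})$, the operator obtained by sandwiching $\rho_{MA_nB_{n-1}'}$ with powers of $\tau_{MB_{n-1}'}$ is positive semidefinite and, by the key separability observation (either $\mathcal{N}$ is entanglement-breaking or the inputs are unentangled), separable across the $A_n : MB_{n-1}'$ cut. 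This is precisely the hypothesis of Lemma \ref{King}, which I apply with $\mathcal{M}_A = \Theta_{\sigma_{B_n}^{(1-\alpha)/\alpha}} \circ \mathcal{N}_{A_n\to B_n}$. The lemma factorizes the $\alpha$-norm into the $1\to\alpha$ norm $\nu_\alpha$ of that map times the $\alpha$-norm of the reduced operator, and the latter is exactly the Schatten-norm form of $\widetilde{D}_\alpha$ of the corresponding marginal. Using data processing under partial trace to pass to the full state of the previous round, this gives the recursion $\widetilde{D}_\alpha(\text{round } n) \leq \frac{\alpha}{\alpha-1}\log\nu_\alpha(\Theta_{\sigma_{B_n}^{(1-\alpha)/\alpha}}\circ\mathcal{N}) + \widetilde{D}_\alpha(\text{round } n-1)$, which I would iterate down to the round-zero base case, where no channel has yet acted, so $\rho = \tau$ and the relative entropy vanishes.

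Unrolling the recursion yields $\widetilde{D}_\alpha(\rho \Vert \tau) \leq n\frac{\alpha}{\alpha-1}\log\sup_{\rho_A}\Vert(\Theta_{\sigma_B^{(1-\alpha)/\alpha}}\circ\mathcal{N})(\rho_A)\Vert_\alpha$. This holds for every replacement state $\sigma$, so I take the infimum over $\sigma$ and recognize the result as the $\alpha$-information radius $\widetilde{K}_\alpha(\mathcal{N})$, which equals $\widetilde{\chi}_\alpha(\mathcal{N})$ by Lemma \ref{chiK}. Combining this with the Nagaoka lower bound, using $\log L \geq nR$, rearranging, and finally taking the supremum over $\alpha > 1$ produces the stated exponent. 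I expect the recursion/additivization step to be the main obstacle, since it is the only place where separability is genuinely used: for a general channel with entangled inputs the sandwiched operator need not be separable, Lemma \ref{King} would not apply, and the R\'enyi quantity would fail to additivize across channel uses. This is exactly the distinction that restricts the bound to entanglement-breaking channels or to encoders that do not entangle inputs.
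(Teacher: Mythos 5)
Your proposal is a faithful reconstruction of the proof of Theorem~\ref{tehThm} --- the Nagaoka comparison against the replacement channel, the use of separability to invoke Lemma~\ref{King}, the recursion down to round zero, and the identification of the resulting quantity with $\widetilde{K}_\alpha(\mathcal{N}) = \widetilde{\chi}_\alpha(\mathcal{N})$ all match the paper. But the statement you were asked to prove is the \emph{corollary}, and your argument stops exactly where the corollary's proof begins. Having obtained
\begin{equation*}
  p_{\operatorname{succ}} \leq 2^{-n\sup_{\alpha>1}\frac{\alpha-1}{\alpha}\left(R - \widetilde{\chi}_\alpha(\mathcal{N})\right)},
\end{equation*}
you never argue that the exponent is \emph{strictly positive} when $R$ exceeds the classical capacity. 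This is not automatic: since $\widetilde{\chi}_\alpha$ is monotone non-decreasing in $\alpha$, one has $\widetilde{\chi}_\alpha(\mathcal{N}) \geq \chi(\mathcal{N})$ for all $\alpha > 1$, and it is a priori conceivable that $\widetilde{\chi}_\alpha(\mathcal{N}) \geq R$ for every $\alpha > 1$ even though $R > \chi(\mathcal{N})$, in which case the bound would be vacuous (exponent $\leq 0$, no decay).

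The missing ingredient is Lemma~\ref{chiLims}: $\lim_{\alpha\searrow 1}\widetilde{\chi}_\alpha(\mathcal{N}) = \chi(\mathcal{N})$, which by monotonicity gives $\inf_{\alpha>1}\widetilde{\chi}_\alpha(\mathcal{N}) = \chi(\mathcal{N})$. Combined with the fact that the feedback-assisted classical capacity of an entanglement-breaking channel equals $\chi(\mathcal{N})$ (so $R > C$ implies $R > \chi(\mathcal{N})$), this yields some $\alpha > 1$ with $R - \widetilde{\chi}_\alpha(\mathcal{N}) > 0$, hence a genuinely decaying exponential. The continuity of $\widetilde{\chi}_\alpha$ as $\alpha \searrow 1$ is itself nontrivial --- the paper devotes Appendix~\ref{sec:pfLemma} to it, via a minimax exchange and a perturbation $\sigma(\varepsilon) = (1-\varepsilon)\sigma + \varepsilon\pi$ --- so it cannot be waved through. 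You should cite Theorem~\ref{tehThm} for the bound you re-derived and supply this limiting argument as the actual content of the corollary's proof.
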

\begin{proof}
 Recall that since $\widetilde{D}_\alpha$ is monotonically increasing in $\alpha$, so is $\widetilde{\chi}_\alpha$. This along with Lemma~\ref{chiLims} implies
\begin{align}
  \inf_{\alpha>1} \widetilde{\chi}_\alpha(\mathcal{N}) & = \lim_{\alpha\searrow 1} \widetilde{\chi}_\alpha(\mathcal{N})
   = \chi(\mathcal{N}) .
  \label{}
\end{align}
Hence, if $R>\chi(\mathcal{N})$, by the continuity of
 $\widetilde{\chi}_\alpha(\mathcal{N})$ as $\alpha\searrow 1$,
 there exists a value of
$\alpha > 1$ such that $\frac{\alpha-1}{\alpha} \left( R- \widetilde{\chi}_\alpha(\mathcal{N}) \right) >0 $. Then the bound in \eqref{eq:main-bound-1} implies an exponential decay of the success probability.
\end{proof}

\section{Conclusion}
By studying the classical capacity of a quantum channel, quantum information theorists found that an implication of quantum mechanics for classical communication is the possibility of superior encoding schemes that use entanglement, a uniquely quantum phenomenon \cite{notAdd}. However, entanglement does not give an advantage for every channel. The results in this and previous papers show that for entanglement-breaking channels, no communication protocols, even with classical feedback, can take advantage of this extra resource. They cannot use it to increase their capacity, nor, as proved in this paper, to even lift the exponentially decaying ceiling on their success probabilities. We perhaps expect this since by definition EB channels destroy entanglement, and classical feedback channels cannot create entanglement. This is the guiding principle behind our proof, in particular the key fact that the transmitter and receiver states are separable throughout the protocol. 

More generally, it is known that classical feedback can give a large boost to the classical capacity of channels which are not entanglement-breaking in at least two different ways: first,
there exist channels for which the Holevo information is small but the classical capacity with feedback included can be quite large
\cite{PhysRevLett.96.150502}. Similarly, there exist channels for which the classical capacity is small but becomes large when classical feedback is available
\cite{PhysRevLett.103.120503}.
%In light of these prior results, this paper in some sense gives the strongest result that we could expect:
%going outside of the class of entanglement-breaking
%channels gives both nonadditivity and big gains from feedback. 
Thus, showing that the Holevo information is a strong-converse bound suggests that for feedback not to help, we require special channels, such as the entanglement-breaking ones.

Going in the opposite direction, it is known that EB channels form a proper superset of classical channels \cite{KHH2012}. We thus obtain the result of \cite{PV} as a direct corollary. Furthermore, since the original posting of our paper as arXiv:1506.02228, an open question that we posed has now been answered --- the strong converse exponent from Theorem~\ref{tehThm} is tight, due to the results in \cite{MO14}. Hence, we obtain as a special case from this and our result the classical results of \cite{A78,1056003,CK82} as well.

A possible direction for future research is to ask the same question for Hadamard channels, which are defined to be complements of EB channels. To define the complementary channel, we first explain the interpretation of channel as a model for open quantum dynamics. That is, a channel from system $A$ to system $B$ can be interpreted as the restriction of a unitary interaction with larger system $BR$, where $R$ is referred to as the ``environment.'' The complementary channel is the map from $A$ to $BR$ tracing out the $B$ system instead and is itself a channel from $A$ to $R$. Hence, Hadamard channels break any entanglement with the environment system that is traced out and is thus related to our guiding principle. The strong converse has already been proved for Hadamard channels \cite{WWY13}, but with the addition of classical feedback, even a weak converse has yet to be proved. Examples of Hadamard channels include generalized dephasing channels, cloning channels, and the Unruh channel \cite{Hdm}.

Finally, we remark here that it should be possible to use the methods given here and in \cite{QFBStrong} in order to characterize a particular adaptive hypothesis testing scenario. Suppose that the goal is to distinguish an entanglement-breaking channel from a replacement channel
by means of adaptive, separability-preserving channels. Then the optimal strong converse exponent should be given in terms of a quantity similar to that in \eqref{eq:main-bound-1}. However, we leave the details for future work.

\bigskip

\textbf{Acknowledgements.} We are grateful to Patrick Hayden, Milan Mosonyi, and Graeme Smith for insightful
discussions about the topic of this paper. We also thank an anonymous referee for many helpful suggestions for improving the paper. MMW acknowledges support from startup funds from the
Department of Physics and Astronomy at LSU, the NSF\ under Award
No.~CCF-1350397, and the DARPA Quiness Program through US Army Research Office
award W31P4Q-12-1-0019. DD acknowledges support from a Stanford Graduate Fellowship.

\appendix

\section{Additivity}

\label{app:additivity}

Here we justify the statement in \eqref{add}.
The backward implication is trivial. As for the forward implication, we first note that given $n$, for all $k \in \mathbb{Z}^+$,
\begin{equation}
  \frac{1}{n} \chi(\mathcal{N}^{\otimes n}) \leq \frac{1}{kn} \chi(\mathcal{N}^{\otimes kn}) ,
  \label{}
\end{equation}
because we can split the $kn$ channels into $k$ blocks of $n$ channels and encode each block independently. In particular, $\frac{1}{n} \chi(\mathcal{N}^{\otimes n}) \geq \chi(\mathcal{N})$. To show the opposite inequality, we assume the contrapositive: $\frac{1}{n} \chi(\mathcal{N}^{\otimes n}) > \chi(\mathcal{N})$ for some $n$. However, this means  
\begin{equation}
  \frac{1}{kn} \chi(\mathcal{N}^{\otimes kn}) - \chi(\mathcal{N}) \geq \frac{1}{n}\chi(\mathcal{N}^{\otimes n}) - \chi(\mathcal{N}) >0  ,
  \label{}
\end{equation}
for all $k\in \mathbb{Z}^+$ and thus the limit in \eqref{capLim} does not converge to $\chi(\mathcal{N})$.

\section{Weak converse and finite bounds}

\label{app:bowen}

We recall for motivation and the reader's convenience the argument for the weak converse from \cite{EBFB}. 
\begin{theorem}
Let $\mathcal{P}_n$ be a protocol for classical feedback-assisted classical communication over an entanglement-breaking channel $\mathcal{N}$ such that it uses the channel $n$ times, has communication rate $R$, and has average probability of decoding error $\varepsilon$. Then, it satisfies the following inequality:
 \begin{equation}
  R\leq\chi(  \mathcal{N})  +g(  n,\varepsilon),
 \end{equation}
where  $\chi(\mathcal{N})$ is the Holevo information of the channel and $g(n,\varepsilon)$ is a real valued function such that
 \begin{equation}
   \lim_{\varepsilon\searrow0} \lim_{n\rightarrow\infty}g(  n,\varepsilon)  =0 .
 \end{equation}
The same is true for a protocol for communication over an arbitrary channel given that the encoder does not entangle inputs across different uses of the channel.
\end{theorem}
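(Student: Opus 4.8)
The plan is to run the standard weak-converse argument---Fano's inequality followed by a data-processing and telescoping bound on the mutual information---with the entire weight of the entanglement-breaking (or unentangled-input) hypothesis concentrated in a single per-round estimate. Assume Alice's message register $M$ is uniform over $L$ values with $\log L \geq nR$, and let $\widehat{M}$ be Bob's final guess, so the average error probability is $\varepsilon$. Write $\widetilde{B}_k$ for the totality of Bob's systems (his quantum registers together with retained copies of the classical feedback) immediately after he receives the $k$-th channel output $B_k$. Since $\widehat{M}$ is produced by Bob's final measurement, which is a quantum channel acting on $\widetilde{B}_n$, the data-processing inequality gives $I(M;\widehat{M}) \leq I(M;\widetilde{B}_n)$, while Fano's inequality gives $I(M;\widehat{M}) = \log L - H(M|\widehat{M}) \geq (1-\varepsilon)\log L - 1$. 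It therefore suffices to prove the feedback-assisted additivity bound $I(M;\widetilde{B}_n) \leq n\chi(\mathcal{N})$.

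I would establish that bound by telescoping over rounds. Bob's initial register is uncorrelated with $M$ by the product form of the initial state in \eqref{initial}, so $I(M;\widetilde{B}_0)=0$. In passing from round $k-1$ to round $k$, Bob's systems are untouched until he receives $B_k = \mathcal{N}(A_k)$, after which he only performs local processing; the latter is a channel on Bob's side and cannot increase the information about $M$, so $I(M;\widetilde{B}_k) \leq I(M;\widetilde{B}_{k-1}B_k)$, and the chain rule yields $I(M;\widetilde{B}_{k-1}B_k) = I(M;\widetilde{B}_{k-1}) + I(M;B_k\mid\widetilde{B}_{k-1})$. The crux is the single-round estimate
\begin{equation}
I(M;B_k \mid \widetilde{B}_{k-1}) \leq \chi(\mathcal{N}),
\end{equation}
since summing the increments from $k=1$ to $n$ then gives $I(M;\widetilde{B}_n) \leq n\chi(\mathcal{N})$ at once.

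To prove the single-round estimate I would invoke the key structural fact from Section~\ref{subsec:protocol}: because $\mathcal{N}$ is entanglement-breaking (or the inputs are unentangled) and $M$ is classical, the pre-channel state is separable across the $A_k : M\widetilde{B}_{k-1}$ cut, say $\rho_{MA_k\widetilde{B}_{k-1}} = \sum_j p(j)\,\rho_{A_k}^j \otimes \omega_{M\widetilde{B}_{k-1}}^j$. Introducing a classical register $J$ for the index $j$, the post-channel state $\sum_j p(j)\ket{j}\bra{j}_J \otimes \mathcal{N}(\rho_{A_k}^j) \otimes \omega_{M\widetilde{B}_{k-1}}^j$ exhibits the Markov structure $M\widetilde{B}_{k-1} - J - B_k$, whence $I(M;B_k\mid J\widetilde{B}_{k-1})=0$ and hence $I(M;B_k\mid\widetilde{B}_{k-1}) \leq I(J;B_k\mid\widetilde{B}_{k-1})$. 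Expanding the latter as $S(B_k\mid\widetilde{B}_{k-1}) - \sum_j p(j) S(\mathcal{N}(\rho_{A_k}^j))$ and using that conditioning does not increase entropy, $S(B_k\mid\widetilde{B}_{k-1}) \leq S(B_k)$, bounds it by the Holevo quantity of the ensemble $\{p(j),\mathcal{N}(\rho_{A_k}^j)\}$, which is at most $\chi(\mathcal{N})$. Assembling the pieces gives $(1-\varepsilon)\log L - 1 \leq n\chi(\mathcal{N})$, and solving for $R$ using $\log L \geq nR$ yields the claim with $g(n,\varepsilon) = \frac{\varepsilon\,\chi(\mathcal{N})}{1-\varepsilon} + \frac{1}{n(1-\varepsilon)}$, which plainly satisfies $\lim_{\varepsilon\searrow 0}\lim_{n\to\infty} g(n,\varepsilon) = 0$. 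The arbitrary-channel-with-unentangled-inputs case is handled identically, as it relies on exactly the same separability property. The main obstacle is precisely this per-round estimate: everything else is generic, and the whole role of the hypothesis is to supply the separable decomposition that produces the Markov chain $M\widetilde{B}_{k-1} - J - B_k$; without separability the post-channel correlations could be super-additive and the clean single-letter Holevo bound would break down.
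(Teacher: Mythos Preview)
Your proposal is correct and follows essentially the same route as the paper's proof: a converse bound on $I(M;\hat{M})$, data processing back to Bob's pre-measurement systems, a chain-rule/telescoping reduction to a per-round increment, and the crucial use of separability---introducing a classical label $J$ for the separable decomposition (the paper's auxiliary system $Y$)---to bound each increment by $\chi(\mathcal{N})$. The only substantive difference is that you invoke Fano's inequality where the paper frames the task as common-randomness generation and applies the Fannes--Audenaert continuity bound; both choices produce a $g(n,\varepsilon)$ with the required double limit.
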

\begin{proof}
We use the notation in Section \ref{subsec:protocol} and take a general approach that is, for instance, presented in Section 21.5.1 of \cite{wilde2011classical}. Let $\overline{\Phi}$ denote the following ``shared randomness'' state:%
\begin{equation}
\overline{\Phi}_{M\hat{M}}=\frac{1}{\left\vert \mathcal{M}\right\vert }%
\sum_{m}\left\vert m\right\rangle \left\langle m\right\vert _{M}%
\otimes\left\vert m\right\rangle \left\langle m\right\vert _{\hat{M}},
\end{equation}
where $\left\vert \mathcal{M}\right\vert $ is the size of the message set
$\mathcal{M}$. Now, suppose the information processing task is common randomness generation instead of classical communication. If we require the state $\rho$ at the end of the protocol to be $\varepsilon'$-close to $\overline{\Phi}$
\begin{equation}
  \lVert \overline{\Phi}  - \rho \rVert_1 \le \varepsilon',
  \label{}
\end{equation}
the Fannes-Audenaert inequality for mutual information \cite{fanIneq, audIneq} gives
\begin{align}
nR &  =I(M;\hat{M})_{\overline{\Phi}}\\
&  \leq I(M;\hat{M})_{\rho}+f(  n,\varepsilon') ,
\end{align}
where $f(  n,\varepsilon')  $ is some continuous function of $n$ and
$\varepsilon'$ with the property: $\lim_{\varepsilon'\searrow0}%
\lim_{n\rightarrow\infty}\frac{1}{n}f(  n,\varepsilon')  =0$.
Continuing, we have%
\begin{align}
I(M;\hat{M})_{\rho} &  \leq I(  M;B_{n}B_{n-1}^{\prime})
_{\rho}\label{eq:1st-steps}\\
&  =I(  M;B_{n-1}^{\prime})  _{\rho}+I(  M;B_{n}%
|B_{n-1}^{\prime})  _{\rho},
\end{align}
where we applied the data processing inequality and the chain rule for
conditional quantum mutual information. We use the chain rule again to get
\begin{equation}
I(  M;B_{n}|B_{n-1}^{\prime})  _{\rho}\leq I(
MB_{n-1}^{\prime};B_{n})  _{\rho}.
\end{equation}
From Section \ref{subsec:protocol}, if $\mathcal{N}$ is entanglement-breaking or if Alice uses separable inputs, Alice and Bob's systems are always separable. Hence, the global state before the $n$th channel use can be written as%
\begin{align}
  \rho_{MA_{n}^{\prime}A_{n}B_{n-1}^{\prime}} & = \sum_m p_M(m) \vert m \rangle \langle m\vert_M \otimes \rho_{A_{n}' A_n B_{n-1}'}^{m} \\
& = \sum_{m}p_{M}\left(
m\right)  \left\vert m\right\rangle \left\langle m\right\vert _{M}\otimes
\sum_{y}p_{Y|M}(  y|m)  \rho_{A_{n}^{\prime} A_n}^{m,y} 
\otimes\rho_{B_{n-1}^{\prime}}^{m,y}.
\end{align}
%so that the channel $\mathcal{N}_{A_{n}\rightarrow B_{n}}$\ acting on this is given by
%\begin{equation}
%\rho_{MA_{n-1}^{\prime}B_{n}B_{n-1}^{\prime}}\equiv\sum_{m}p_{M}\left(
%m\right)  \left\vert m\right\rangle \left\langle m\right\vert _{M}\otimes
%\sum_{y}p_{Y|M}(  y|m)  \mathcal{N}_{A_{n}\rightarrow B_{n}}\left(
%\rho_{A_{n-1}^{\prime}A_{n}}^{m,y}\right)  \otimes\rho_{B_{n-1}^{\prime}%
%}^{m,y}.
%\end{equation}
Hence, the state after the channel is given by
\begin{equation}
  \rho_{MA_{n}' B_n B_{n-1}'} = 
\sum_{m}p_{M}\left(
m\right)  \left\vert m\right\rangle \left\langle m\right\vert _{M}\otimes
\sum_{y}p_{Y|M}(  y|m)  \mathcal{N}_{A_n \rightarrow B_n} \!\left(\rho_{A_{n}^{\prime} A_n}^{m,y} \right)
\otimes\rho_{B_{n-1}^{\prime}}^{m,y}.
  \label{}
\end{equation}
We introduce an auxiliary system $Y$ that labels the separable sum over the index $y$. That is, $Y$ is chosen such that the global state is
\begin{equation}
\rho_{MYA_{n}^{\prime}B_{n}B_{n-1}^{\prime}}\equiv\sum_{m,y}p_{M}\left(
m\right)  p_{Y|M}(  y|m)  \left\vert m\right\rangle \left\langle
m\right\vert _{M}\otimes\left\vert y\right\rangle \left\langle y\right\vert
_{Y}\otimes \mathcal{N}_{A_n \rightarrow B_n} \!\left(\rho_{A_{n}^{\prime} A_n}^{m,y} \right)
  \otimes\rho_{B_{n-1}^{\prime}}^{m,y}.
\end{equation}
We trace over $A_{n}'$ to get
\begin{equation}
\rho_{MYB_{n}B_{n-1}^{\prime}} = \sum_{m,y}p_{M}\left(
m\right)  p_{Y|M}(  y|m)  \left\vert m\right\rangle \left\langle
m\right\vert _{M}\otimes\left\vert y\right\rangle \left\langle y\right\vert
_{Y}\otimes \rho_{B_n}^{m,y} 
  \otimes\rho_{B_{n-1}^{\prime}}^{m,y} ,
  \label{}
\end{equation}
where
\begin{equation}
  \rho_{B_n}^{m,y} \equiv \Tr_{A_{ n }'}  \left(\mathcal{N}_{A_n \rightarrow B_n} \left(\rho_{A_{n}^{\prime} A_n}^{m,y} \right)\right) = \mathcal{N}_{A_n \rightarrow B_n} \left(\rho_{A_n}^{m,y}\right) .
  \label{output}
\end{equation}
This allows us to argue
\begin{align}
I(  MB_{n-1}^{\prime};B_{n})  _{\rho} &  \leq I(
MYB_{n-1}^{\prime};B_{n})  _{\rho}\\
&  =I(  MY;B_{n})  _{\rho}+I(  B_{n-1}^{\prime}%
;B_{n}|MY)  _{\rho}\\
&  =I(  MY;B_{n})  _{\rho}\\
& \leq \chi(\mathcal{N}) ,
\end{align}
where the first inequality is from data processing, the first equality from
the chain rule, and the second equality because the $B_{n-1}' B_n$ system
is in a product state when conditioning on $M$ and $Y$. Finally, given equation \eqref{output}, the state $\rho_{MYB_n}$ is a classical-quantum state of the form:
\begin{align}
   \rho_{MYB_{n}} & = \Tr_{ B_{n-1}'} \left(\rho_{MY  B_n B_{n-1}'} \right)\\
   & = \sum_{m,y} p_M \left(
m\right)  p_{Y|M}(  y|m)  \left\vert m\right\rangle \left\langle
m\right\vert _{M}\otimes\left\vert y\right\rangle \left\langle y\right\vert
_{Y}\otimes \mathcal{N}_{A_n \rightarrow B_n} \left(\rho_{A_n}^{m,y}\right) .
  \label{}
\end{align}
Hence, the definition of the Holevo information of the channel $\mathcal{N}$ gives us final inequality.
Putting things together, we find that%
\begin{align}
I(M;\hat{M})_{\rho} &  \leq\chi(  \mathcal{N})  +I(
M;B_{n-1}^{\prime})  _{\rho}\\
&  \leq\chi(  \mathcal{N})  +I(  M;B_{n-1}B_{n-2}^{\prime
})  _{\rho},
\end{align}
where the last inequality follows from the data processing inequality. But now, we recognize the quantity $I(  M;B_{n-1}B_{n-2}^{\prime})
_{\rho}$ is of the same form as $I(  M;B_{n}B_{n-1}^{\prime})
_{\rho}$ in \eqref{eq:1st-steps}, so that we can iterate through the same
sequence of arguments to get%
\begin{equation}
I(M;\hat{M})_{\rho}\leq2\chi(  \mathcal{N})  +I(
M;B_{n-2}B_{n-3}^{\prime})  _{\rho}.
\end{equation}
Continuing all the way back to the first channel use, we find%
\begin{equation}
I(M;\hat{M})_{\rho}\leq n\chi(  \mathcal{N}) 
\end{equation}
since $I(M;B_0')=0$ (see \eqref{initial}). Thus, we conclude
\begin{equation}
R\leq\chi(  \mathcal{N})  +\frac{1}{n}f(  n,\varepsilon')
.
\end{equation}
Now, the rate for classical communication with $n$ channel uses and error $\varepsilon$ is at most that of randomness generation with $n$ uses and error $\varepsilon'$, where $\varepsilon'$ is some function of $\varepsilon$ such that $\lim_{\varepsilon \searrow 0} \varepsilon' = 0$. Hence, $g(n,\varepsilon) \equiv \frac{1}{n}f(n,\varepsilon')$ is what we need.
\end{proof}

\begin{corollary}
  [Weak Converse]
  The classical capacity of entanglement-breaking channels with classical feedback is given by the Holevo information. The same is true for arbitrary channels with separable inputs.
\end{corollary}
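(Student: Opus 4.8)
The plan is to follow Nagaoka's reduction of communication to a binary hypothesis test and then exploit separability to peel the channel uses off one at a time. First I would fix the protocol $\mathcal{P}_n$ and assume uniform messages, since we bound the \emph{average} success probability. Using the notation of Section~\ref{subsec:protocol}, I write $p_{\operatorname{succ}} = \Tr[T\rho]$, where $\rho$ is the true final state on $MB_nB_{n-1}'$ and $T = \sum_m |m\rangle\langle m|_M \otimes D^m$ is assembled from Bob's decoding POVM and the message register, so that $0 \le T \le I$. The comparison state $\tau$ is the final state of the \emph{same} protocol run with $\mathcal{N}$ replaced by a replacement channel $\mathcal{R}_\sigma$; because Bob's system is then independent of $m$, one computes $\Tr[T\tau] = 1/L$, where $L$ is the number of messages. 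Applying Lemma~\ref{Nagaoka} with $\Lambda = T$, $p = p_{\operatorname{succ}}$, $q = 1/L$ gives, for all $\alpha>1$,
\begin{equation}
  \frac{\alpha}{\alpha-1}\log p_{\operatorname{succ}} + \log L \leq \widetilde{D}_\alpha(\rho_{MB_nB_{n-1}'} \Vert \tau_{MB_nB_{n-1}'}) ,
\end{equation}
so the whole problem reduces to an \emph{upper} bound on the right-hand relative entropy.

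The heart of the argument is this upper bound, obtained by a recursion on the number of rounds. I would rewrite $\widetilde{D}_\alpha(\rho_{MB_nB_{n-1}'} \Vert \tau_{MB_nB_{n-1}'})$ as $\frac{\alpha}{\alpha-1}$ times the log of a Schatten $\alpha$-norm of $(\Theta_{\sigma^{(1-\alpha)/\alpha}}\circ \mathcal{N}_{A_n\to B_n})$ applied to the sandwiched operator $\tau^{(1-\alpha)/(2\alpha)}\rho_{MA_nB_{n-1}'}\tau^{(1-\alpha)/(2\alpha)}$, where $\Theta_\sigma(\rho)=\sigma^{1/2}\rho\sigma^{1/2}$. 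The crucial input, the observation from Section~\ref{subsec:protocol} and \cite{EBFB}, is that when $\mathcal{N}$ is entanglement-breaking (or the inputs are unentangled), the state is separable across the Alice/Bob cut, so this sandwiched operator is positive semidefinite and separable across $A_n : MB_{n-1}'$. Since $\Theta_\sigma\circ\mathcal{N}$ is completely positive, this is exactly the hypothesis of Lemma~\ref{King}, which factors the norm as $\nu_\alpha(\Theta_{\sigma^{(1-\alpha)/\alpha}}\circ\mathcal{N})$ times the norm of the $A_n$-marginal $\tau^{(1-\alpha)/(2\alpha)}\rho_{MB_{n-1}'}\tau^{(1-\alpha)/(2\alpha)}$. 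Converting the latter back into a relative entropy and applying monotonicity of $\widetilde{D}_\alpha$ under Bob's round-$(n-1)$ decoding (a partial-trace/data-processing step) re-expresses the residual as the relative entropy one round earlier:
\begin{equation}
  \widetilde{D}_\alpha(\rho_{MB_nB_{n-1}'} \Vert \tau_{MB_nB_{n-1}'}) \leq \frac{\alpha}{\alpha-1}\log \nu_\alpha(\Theta_{\sigma^{(1-\alpha)/\alpha}}\circ \mathcal{N}) + \widetilde{D}_\alpha(\rho_{MB_{n-1}B_{n-2}'} \Vert \tau_{MB_{n-1}B_{n-2}'}) .
\end{equation}
Iterating $n$ times down to round zero, where $\rho_{MB_0'} = \tau_{MB_0'}$ so the relative entropy vanishes, accumulates exactly $n$ copies of the log-norm term.

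The remaining steps are clean identifications. I would invoke \cite{pToqNorms} to replace the supremum defining $\nu_\alpha$ over trace-class operators by one over quantum states, so that $\frac{\alpha}{\alpha-1}\log\nu_\alpha(\Theta_{\sigma^{(1-\alpha)/\alpha}}\circ\mathcal{N}) = \sup_{\rho}\widetilde{D}_\alpha(\mathcal{N}(\rho)\Vert\sigma)$. As $\sigma$ was arbitrary, the infimum over $\sigma$ produces the $\alpha$-information radius $\widetilde{K}_\alpha(\mathcal{N})$, which equals $\widetilde{\chi}_\alpha(\mathcal{N})$ by Lemma~\ref{chiK}. Combining with the hypothesis-testing lower bound and using $R \le \frac{1}{n}\log L$ yields $\frac{1}{n}\log p_{\operatorname{succ}} \leq -\frac{\alpha-1}{\alpha}(R - \widetilde{\chi}_\alpha(\mathcal{N}))$, and optimizing over $\alpha>1$ gives the claimed bound; the unentangled-input case is handled verbatim, since that is the other situation guaranteeing separability. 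I expect the main obstacle to be the separability step: one must check carefully that conjugation by the (possibly singular, negative-power) operator $\tau^{(1-\alpha)/(2\alpha)}$ preserves the product structure across the correct $A_n : MB_{n-1}'$ cut and that the hypotheses of Lemma~\ref{King} hold at \emph{every} round, so the telescoping is legitimate. Everything else is a routine telescoping plus the two black-box lemmas.
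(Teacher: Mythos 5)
Your argument is a faithful reconstruction of the paper's proof of Theorem~\ref{tehThm} (the strong converse exponent), but that is not the route the paper takes for this corollary. The paper proves the Weak Converse in Appendix~\ref{app:bowen} by a different and more elementary argument following \cite{EBFB}: bound $nR$ by $I(M;\hat{M})_\rho$ up to a Fannes--Audenaert correction, then peel off one channel use per round using the chain rule for conditional quantum mutual information, data processing, and the same separability observation, accumulating one $\chi(\mathcal{N})$ per round; the corollary is then immediate. Your R\'enyi-entropy route is strictly stronger --- it yields an exponential decay of $p_{\operatorname{succ}}$ rather than merely an error bounded away from zero --- and deriving a weak converse from a strong one is perfectly legitimate; the trade-off is that you need the heavier machinery of Lemmas~\ref{Nagaoka}, \ref{King}, and \ref{chiK}, whereas the appendix argument needs only standard entropy inequalities and continuity of mutual information.

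However, as written your proposal does not finish the corollary. The bound you reach, $\frac{1}{n}\log p_{\operatorname{succ}} \leq -\frac{\alpha-1}{\alpha}\left(R-\widetilde{\chi}_\alpha(\mathcal{N})\right)$, only rules out reliable communication at rates $R > \inf_{\alpha>1}\widetilde{\chi}_\alpha(\mathcal{N})$. To conclude that the capacity is at most the \emph{Holevo information} you must still invoke Lemma~\ref{chiLims}: by monotonicity in $\alpha$ together with $\lim_{\alpha\searrow 1}\widetilde{\chi}_\alpha(\mathcal{N})=\chi(\mathcal{N})$ one gets $\inf_{\alpha>1}\widetilde{\chi}_\alpha(\mathcal{N})=\chi(\mathcal{N})$, so that any $R>\chi(\mathcal{N})$ admits some $\alpha>1$ with a strictly positive exponent. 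Without this limiting step the argument leaves open the possibility of reliable communication at rates strictly between $\chi(\mathcal{N})$ and $\inf_{\alpha>1}\widetilde{\chi}_\alpha(\mathcal{N})$. Finally, the corollary asserts an equality for the capacity, so the achievability direction $C\geq\chi(\mathcal{N})$ (the HSW theorem, which holds a fortiori in the presence of feedback) needs at least a mention; your proposal addresses only the converse inequality.
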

\begin{proof}
  This is immediate.
\end{proof}

\section{Limit value of the $\alpha$-Holevo information}
\label{sec:pfLemma}
The arguments here are essentially the same as those in
\cite{MH,QFBStrong}, along with an additional insight
from \cite[Appendix A]{TWW14}.
We first recall a minimax result due to \cite[Corollary A2]{MH}.
\begin{lemma}
  Suppose a function $f:X \times Y \rightarrow \widebar{\mathbb{R}}$, where $X$ is a compact topological space, $Y$ is a subset of $\mathbb{R}$, and $\widebar{\mathbb{R}}$ is the extended real numbers, satisfies
  \begin{enumerate}
    \item $\forall y\in Y$, $f(\cdot,y)$ is lower semicontinuous.
    \item $\forall x\in X$, $f(x,\cdot)$ is monotonic.
  \end{enumerate}
  Then,
  \begin{equation}
    \inf_{x\in X} \sup_{y\in Y} f(x,y) = \sup_{y\in Y} \inf_{x\in X}  f(x,y)
    \label{}.
  \end{equation}
  If the first condition was instead
  \begin{enumerate}
    \item $\forall y\in Y$, $f(\cdot,y)$ is upper semicontinuous,
  \end{enumerate}
  then
  \begin{equation}
    \inf_{y\in Y} \sup_{x\in X}  f(x,y)= \sup_{x\in X} \inf_{y\in Y} f(x,y)
    \label{}.
  \end{equation}
  
  \label{topStuff}
\end{lemma}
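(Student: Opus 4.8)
The plan is to establish the standard weak-duality inequality $\sup_{y\in Y}\inf_{x\in X} f(x,y) \le \inf_{x\in X}\sup_{y\in Y} f(x,y)$, which holds unconditionally, and then to obtain the reverse inequality using compactness of $X$ together with the monotonicity hypothesis. The engine behind the reverse inequality is the finite intersection property for closed sets in a compact space, and the role of monotonicity is to make that property trivial to verify.

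First I would reduce the reverse inequality to the following claim: for every real $c$ strictly larger than $v := \sup_{y\in Y}\inf_{x\in X} f(x,y)$, there exists $x^{\ast}\in X$ with $f(x^{\ast},y)\le c$ for all $y\in Y$. Such an $x^{\ast}$ immediately gives $\inf_{x}\sup_{y} f(x,y)\le c$, and letting $c\searrow v$ then yields $\inf_{x}\sup_{y} f(x,y)\le v$, which together with weak duality is the desired equality. For each $y\in Y$ and real $c$ define the sublevel set $A_{y,c}:=\{x\in X: f(x,y)\le c\}$. Lower semicontinuity of $f(\cdot,y)$ guarantees that $A_{y,c}$ is closed, and producing the point $x^{\ast}$ is precisely the assertion that $\bigcap_{y\in Y} A_{y,c}$ is nonempty. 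By compactness of $X$ it suffices to verify the finite intersection property.

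Here the monotonicity hypothesis does the essential work. Given finitely many $y_{1},\dots,y_{k}\in Y$, monotonicity of $f(x,\cdot)$ (say non-decreasing; the non-increasing case is symmetric) makes the sets $A_{y_{i},c}$ nested: if $y_{\max}$ is the largest of the $y_{i}$, then $f(x,y_{\max})\ge f(x,y_{i})$ for every $x$, so $A_{y_{\max},c}\subseteq A_{y_{i},c}$ and hence $\bigcap_{i} A_{y_{i},c}=A_{y_{\max},c}$. Since $c>v\ge \inf_{x} f(x,y_{\max})$, there is some $x$ with $f(x,y_{\max})<c$, so $A_{y_{\max},c}$ is nonempty. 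This establishes the finite intersection property, hence the nonemptiness of $\bigcap_{y} A_{y,c}$, completing the reverse inequality and the first claim.

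For the second (upper semicontinuous) claim I would simply apply the first claim to $g:=-f$: then $g(\cdot,y)$ is lower semicontinuous and $g(x,\cdot)$ is monotonic, so the first result gives $\inf_{x}\sup_{y} g = \sup_{y}\inf_{x} g$. Rewriting $\sup_{y} g=-\inf_{y} f$ and $\inf_{x} g=-\sup_{x} f$ converts this into $\sup_{x}\inf_{y} f=\inf_{y}\sup_{x} f$, which is exactly the stated identity. The only points requiring care are the bookkeeping with extended real values, in particular the cases $v=\pm\infty$, where one checks directly that both sides coincide, and the verification that lower semicontinuity yields closed sublevel sets even for extended-real-valued $f$. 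Neither is a genuine obstacle, so the crux of the argument is really the single observation that monotonicity collapses any finite intersection of sublevel sets down to one of them.
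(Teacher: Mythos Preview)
The paper does not actually prove this lemma; it merely recalls it as ``a minimax result due to \cite[Corollary A2]{MH}'' and uses it as a black box. Your argument is correct and supplies a self-contained proof where the paper provides only a citation: weak duality gives one inequality, and for the other you use that lower semicontinuity makes the sublevel sets $A_{y,c}=\{x:f(x,y)\le c\}$ closed while monotonicity nests them, so the finite intersection property is automatic and compactness of $X$ yields a common point $x^{\ast}$. The reduction of the upper-semicontinuous case to the first via $g=-f$ is also fine. The only remark is cosmetic: you could streamline the edge cases $v=\pm\infty$ by noting that for $v=+\infty$ there is nothing to prove, and for $v=-\infty$ the argument with arbitrary real $c$ already forces $\inf_{x}\sup_{y}f=-\infty$.
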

We now prove Lemma \ref{chiLims}.
\begin{proof}
  For fixed $\rho_{XR'}$ and $\alpha$, $\sigma_{R'}
  \mapsto \widetilde{D}_\alpha(\rho_{XR'},\rho_X \otimes \sigma_{R'})$ is  lower semicontinuous (see, e.g., \cite[Appendix A]{QFBStrong}). Furthermore, for fixed $\rho_{XR'}$ and
  $\sigma_{R'}$, $\alpha \mapsto \widetilde{D}_\alpha(\rho_{XR'},\rho_X \otimes \sigma_{R'})$ is monotone non-decreasing.  We can therefore invoke Lemma \ref{topStuff} with $X= \mathcal{S}(\mathcal{H}_{R'})$ and $Y = (0,1)$. We use this, the definition of the Holevo information, and properties of the quantum relative entropy to find the
  following one-sided limit of the $\alpha$-Holevo information of a quantum channel $\mathcal{N}:\mathcal{B}(\mathcal{H}_R) \rightarrow \mathcal{B}(\mathcal{H}_{R'})$:
  \begin{align}
    \lim_{\alpha\nearrow 1} \widetilde{\chi}_\alpha(\mathcal{N}) & = \sup_{\alpha \in (0,1)} \widetilde{\chi}_\alpha(\mathcal{N})\\
    & = \sup_{\alpha \in(0,1)}\sup_{\left\{p_X(x),\rho_x \right\}} \widetilde{\chi}_\alpha(\left\{ p_X(x),\mathcal{N}(\rho_x) \right\}) \\
    & = \sup_{\left\{p_X(x),\rho_x \right\}}\sup_{\alpha \in(0,1)} \inf_{\sigma_{R'} \in \mathcal{S}(\mathcal{H}_{R'})} \widetilde{D}_\alpha(\rho_{XR'}\Vert\rho_X\otimes\sigma_{R'})\\
    & = \sup_{\left\{p_X(x),\rho_x \right\}} \inf_{\sigma_{R'} \in \mathcal{S}(\mathcal{H}_{R'}) }\sup_{\alpha \in(0,1)}  \widetilde{D}_\alpha(\rho_{XR'}\Vert\rho_X\otimes\sigma_{R'})\\
    & = \sup_{\left\{p_X(x),\rho_x \right\}} \inf_{\sigma_{R'} \in \mathcal{S}(\mathcal{H}_{R'})} D(\rho_{XR'}\Vert\rho_X\otimes\sigma_{R'})\\
    & = \sup_{\left\{p_X(x),\rho_x \right\}} I(X:R')_{\rho}\\
    & = \chi(\mathcal{N}) ,
    \label{}
  \end{align}
where
\begin{equation}
  \rho_{XR'} \equiv  \sum_x p_X(x) \vert x \rangle \langle x \vert_X \otimes \left[ \mathcal{N}(\rho_x) \right]_{R'} .
  \label{}
\end{equation}
The fourth equality follows from applying Lemma~\ref{topStuff}. The fifth follows from \eqref{eq:sandwiched-to-Umegaki} and the fact that the sandwiched R\'enyi relative entropy is monotone non-decreasing in $\alpha$.

To find the other limit, we note that for fixed $\rho_{R}$ and $\sigma_{R^{\prime}}$, $\alpha\mapsto
\widetilde{D}_{\alpha}(\mathcal{N}(\rho_{R})\Vert\sigma_{R^{\prime}})$ is
monotone non-decreasing. We then use an idea from \cite[Appendix A]{TWW14}. Note that for fixed $\alpha$
and positive definite $\sigma_{R^{\prime}}$, $\rho_{R}\mapsto\widetilde
{D}_{\alpha}(\mathcal{N}(\rho_{R})\Vert\sigma_{R^{\prime}})$ is continuous (and thus upper semicontinuous).
Let $\varepsilon \in (0,1)$ and define $\sigma\left(
\varepsilon\right)  \equiv\left(  1-\varepsilon\right)  \sigma+\varepsilon\pi
$, where $\pi$ is the maximally mixed state. Thus
$\sigma(\varepsilon)$ is positive definite for $\sigma$ positive semidefinite. Consider that%
\begin{equation}
\sigma(  \varepsilon)  \geq\left(  1-\varepsilon\right)  \sigma,
\end{equation}
which implies that%
\begin{equation}
D(  \rho\Vert\sigma(  \varepsilon)  )  \leq D(
\rho\Vert\left(  1-\varepsilon\right)  \sigma)  =D(  \rho
\Vert\sigma)  -\log\left(  1-\varepsilon\right)  .
\label{eq:dominating-prop}
\end{equation}
(This is because of the well known fact that $\sigma \leq \sigma'$ implies that $D(\rho\Vert \sigma) \geq 
D(\rho\Vert \sigma')$.)
Hence, using Lemma \ref{topStuff}, Lemma \ref{chiK}, and the identification of
the Holevo information as an information radius \cite{OPW, SBW}, we can
conclude the following chain of equalities:%
\begin{align}
\lim_{\alpha\searrow1}\widetilde{\chi}_{\alpha}(\mathcal{N}) &  =\inf
_{\alpha>1}\widetilde{\chi}_{\alpha}(\mathcal{N})\\
&  =\inf_{\alpha>1}\widetilde{K}_{\alpha}(\mathcal{N})\\
&  =\inf_{\alpha>1}\inf_{\sigma_{R^{\prime}}\in\mathcal{S}(\mathcal{H}%
_{R^{\prime}})}\sup_{\rho_{R}\in\mathcal{S}(\mathcal{H}_{R})}\widetilde
{D}_{\alpha}\left(  \mathcal{N}(\rho_{R})\Vert\sigma_{R^{\prime}}\right)  \\
&  \leq\inf_{\alpha>1}\inf_{\sigma_{R^{\prime}}\in\mathcal{S}(\mathcal{H}%
_{R^{\prime}})}\sup_{\rho_{R}\in\mathcal{S}(\mathcal{H}_{R})}\widetilde
{D}_{\alpha}\left(  \mathcal{N}(\rho_{R})\Vert\sigma(  \varepsilon
)  _{R^{\prime}}\right)  \\
&  =\inf_{\sigma_{R^{\prime}}\in\mathcal{S}(\mathcal{H}_{R^{\prime}})}%
\inf_{\alpha>1}\sup_{\rho_{R}\in\mathcal{S}(\mathcal{H}_{R})}\widetilde
{D}_{\alpha}\left(  \mathcal{N}(\rho_{R})\Vert\sigma(  \varepsilon
)  _{R^{\prime}}\right)  \\
&  =\inf_{\sigma_{R^{\prime}}\in\mathcal{S}(\mathcal{H}_{R^{\prime}})}%
\sup_{\rho_{R}\in\mathcal{S}(\mathcal{H}_{R})}\inf_{\alpha>1}\widetilde
{D}_{\alpha}\left(  \mathcal{N}(\rho_{R})\Vert\sigma(  \varepsilon
)  _{R^{\prime}}\right)  \\
&  =\inf_{\sigma_{R^{\prime}}\in\mathcal{S}(\mathcal{H}_{R^{\prime}})}%
\sup_{\rho_{R}\in\mathcal{S}(\mathcal{H}_{R})}D(  \mathcal{N}(\rho
_{R})\Vert\sigma(  \varepsilon)  _{R^{\prime}})  \\
&  \leq\inf_{\sigma_{R^{\prime}}\in\mathcal{S}(\mathcal{H}_{R^{\prime}})}%
\sup_{\rho_{R}\in\mathcal{S}(\mathcal{H}_{R})}D(  \mathcal{N}(\rho
_{R})\Vert\sigma_{R^{\prime}})  -\log(  1-\varepsilon)  \\
&  =\chi(\mathcal{N})-\log\left(  1-\varepsilon\right)  .
\end{align}
The first inequality follows because there is an infimum with respect to $\sigma_{R'}$, such that replacing 
$\sigma_{R'}$ with $\sigma(\varepsilon)_{R'}$ can never decrease the quantity. The fifth equality follows from applying Lemma~\ref{topStuff}. The sixth equality follows from \eqref{eq:sandwiched-to-Umegaki} and the fact that the sandwiched R\'enyi relative entropy is monotone non-decreasing in $\alpha$. The last inequality follows from \eqref{eq:dominating-prop}. The last equality uses $\chi(\mathcal{N}) = K(\mathcal{N})$.

Given that $\varepsilon>0$ was arbitrary, we can conclude that $\lim
_{\alpha\searrow1}\widetilde{\chi}_{\alpha}(\mathcal{N})\leq\chi(\mathcal{N}%
)$. Combined with the fact that $\widetilde{\chi}_{\alpha}(\mathcal{N}%
)\geq\chi(\mathcal{N})$ for all $\alpha\geq1$, we conclude that%
\begin{equation}
\lim_{\alpha\searrow1}\widetilde{\chi}_{\alpha}(\mathcal{N})=\chi
(\mathcal{N}).
\end{equation}
This also implies that
\begin{equation}
\lim_{\alpha\searrow1}\widetilde{K}_{\alpha}(\mathcal{N})=K
(\mathcal{N}).
\end{equation}

We now show that
$\lim_{\alpha\nearrow1}\widetilde{K}_{\alpha}(\mathcal{N})=K
(\mathcal{N})
$,
which follows from a similar line of reasoning:
\begin{align}
  \lim_{\alpha \nearrow 1} \wtil{K}_\alpha(\mathcal{N}) & = \sup_{\alpha \in (0,1)} \wtil{K}_\alpha(\mathcal{N}) \\
  & = \sup_{\alpha \in (0,1)} \inf_{\sigma_{R'} \in \mathcal{S}(\mathcal{H}_{R'})} \sup_{\rho_R \in \mathcal{S}(\mathcal{H}_R)} \wtil{D}_\alpha(\mathcal{N}(\rho_R) \Vert \sigma_{R'})\\
  & =  \inf_{\sigma_{R'} \in \mathcal{S}(\mathcal{H}_{R'})} \sup_{\alpha \in (0,1)}\sup_{\rho_R \in \mathcal{S}(\mathcal{H}_R)} \wtil{D}_\alpha(\mathcal{N}(\rho_R) \Vert \sigma_{R'})\\
  & =  \inf_{\sigma_{R'} \in \mathcal{S}(\mathcal{H}_{R'})} \sup_{\rho_R \in \mathcal{S}(\mathcal{H}_R)} \sup_{\alpha \in (0,1)}\wtil{D}_\alpha(\mathcal{N}(\rho_R) \Vert \sigma_{R'})\\
  & =  \inf_{\sigma_{R'} \in \mathcal{S}(\mathcal{H}_{R'})} \sup_{\rho_R \in \mathcal{S}(\mathcal{H}_R)} {D}(\mathcal{N}(\rho_R) \Vert \sigma_{R'})\\
  & = K(\mathcal{N}),
  \label{}
\end{align}
where the third equality follows from Lemma~\ref{topStuff} and the fact that for fixed $\alpha$ and $\rho_R$, $\sigma_{R'} \mapsto \wtil{D}_\alpha(\mathcal{N}(\rho_R) \Vert \sigma_{R'})$ is lower semicontinuous and that the pointwise supremum of lower semicontinuous functions is lower semicontinuous. 
\end{proof}

\bibliographystyle{alpha}

\bibliography{Ref}

\newcommand{\etalchar}[1]{$^{#1}$}
\begin{thebibliography}{BGPWW15}

\bibitem[Ari73]{A73}
Suguru Arimoto.
\newblock On the converse to the coding theorem for discrete memoryless
  channels.
\newblock {\em IEEE Transactions on Information Theory}, 19(3):357--359, May
  1973.

\bibitem[Aud07]{audIneq}
Koenraad M.~R. Audenaert.
\newblock A sharp continuity estimate for the von {N}eumann entropy.
\newblock {\em Journal of Physics A: Mathematical and Theoretical},
  40(28):8127, July 2007.
\newblock arXiv:quant-ph/0610146.

\bibitem[Aud09]{pToqNorms}
Koenraad M.~R. Audenaert.
\newblock A note on the $p\rightarrow q$ norms of 2-positive maps.
\newblock {\em Linear Algebra and its Applications}, 430(1):1436--1440,
  February 2009.
\newblock arXiv:math-ph/0505085.

\bibitem[Aug78]{A78}
Udo Augustin.
\newblock Noisy channels.
\newblock Habilitation thesis, Universitat Erlangen-Nurnberg, West Germany,
  September 1978.

\bibitem[BDSS06]{PhysRevLett.96.150502}
Charles~H. Bennett, Igor Devetak, Peter~W. Shor, and John~A. Smolin.
\newblock Inequalities and separations among assisted capacities of quantum
  channels.
\newblock {\em Physical Review Letters}, 96(15):150502, April 2006.
\newblock arXiv:quant-ph/0406086.

\bibitem[Bei13]{B13monotone}
Salman Beigi.
\newblock Sandwiched {R\'enyi} divergence satisfies data processing inequality.
\newblock {\em Journal of Mathematical Physics}, 54(12):122202, December 2013.
\newblock arXiv:1306.5920.

\bibitem[BGPWW15]{BPWW14}
Bhaskar~Roy Bardhan, Raul Garcia-Patron, Mark~M. Wilde, and Andreas Winter.
\newblock Strong converse for the classical capacity of all phase-insensitive
  bosonic {Gaussian} channels.
\newblock {\em IEEE Transactions on Information Theory}, 61(4):1842--1850,
  April 2015.
\newblock arXiv:1401.4161.

\bibitem[BHTW10]{Hdm}
Kamil {Br{\'a}dler}, Patrick {Hayden}, Dave {Touchette}, and Mark~M. {Wilde}.
\newblock {Trade-off capacities of the quantum Hadamard channels}.
\newblock {\em Physical Review A}, 81(6):062312, June 2010.
\newblock arXiv:1001.1732.

\bibitem[BN05]{EBFB}
Garry {Bowen} and Rajagopal {Nagarajan}.
\newblock On feedback and the classical capacity of a noisy quantum channel.
\newblock {\em IEEE Transactions on Information Theory}, 51(1):320--324, 2005.
\newblock arXiv:quant-ph/0305176.

\bibitem[CK82]{CK82}
Imre Csiszar and Janos Korner.
\newblock Feedback does not affect the reliability function of a {DMC} at rates
  above capacity.
\newblock {\em IEEE Transactions on Information Theory}, 28(1):92--93, January
  1982.

\bibitem[CMW16]{QFBStrong}
Tom {Cooney}, Milan {Mosonyi}, and Mark~M. {Wilde}.
\newblock Strong converse exponents for a quantum channel discrimination
  problem and quantum-feedback-assisted communication.
\newblock {\em Communications in Mathematical Physics}, 344(3):797--829, June
  2016.
\newblock arXiv:1408.3373.

\bibitem[CT91]{book1991cover}
Thomas~M. Cover and Joy~A. Thomas.
\newblock {\em Elements of Information Theory}.
\newblock Wiley-Interscience, August 1991.

\bibitem[DK79]{1056003}
Gunther Dueck and Janos Korner.
\newblock Reliability function of a discrete memoryless channel at rates above
  capacity.
\newblock {\em IEEE Transactions on Information Theory}, 25(1):82--85, January
  1979.

\bibitem[EGK12]{EK12}
Abbas El~Gamal and Young-Han Kim.
\newblock {\em Network information theory}.
\newblock Cambridge University Press, January 2012.

\bibitem[Fan73]{fanIneq}
Mark Fannes.
\newblock A continuity property of the entropy density for spin lattice
  systems.
\newblock {\em Communications in Mathematical Physics}, 31(4):291--294,
  December 1973.

\bibitem[FL13]{FL13}
Rupert~L. Frank and Elliott~H. Lieb.
\newblock Monotonicity of a relative {R\'enyi} entropy.
\newblock {\em Journal of Mathematical Physics}, 54(12):122201, December 2013.
\newblock arXiv:1306.5358.

\bibitem[GK12]{GK12}
Abbas~El Gamal and Young-Han Kim.
\newblock {\em Network information theory}.
\newblock Cambridge University Press, January 2012.
\newblock arXiv:1001.3404.

\bibitem[Has09]{notAdd}
Matthew~B. Hastings.
\newblock Superadditivity of communication capacity using entangled inputs.
\newblock {\em Nature Physics}, 5:255--257, April 2009.
\newblock arXiv:0809.3972.

\bibitem[Hay07]{PhysRevA.76.062301}
Masahito Hayashi.
\newblock Error exponent in asymmetric quantum hypothesis testing and its
  application to classical-quantum channel coding.
\newblock {\em Physical Review A}, 76(6):062301, December 2007.
\newblock arXiv:quant-ph/0611013.

\bibitem[Hol98]{H}
Alexander~S. Holevo.
\newblock The capacity of the quantum channel with general signal states.
\newblock {\em IEEE Transactions on Information Theory}, 44(1):269--273,
  January 1998.
\newblock arXiv:quant-ph/9611023.

\bibitem[Hol06]{H06}
Alexander~S. Holevo.
\newblock Multiplicativity of p-norms of completely positive maps and the
  additivity problem in quantum information theory.
\newblock {\em Russian Mathematical Surveys}, 61(2):301--339, 2006.

\bibitem[HSR03]{EB}
Micha\l\ Horodecki, Peter~W. Shor, and Mary~Beth Ruskai.
\newblock Entanglement breaking channels.
\newblock {\em Reviews in Mathematical Physics}, 15(6):629--641, August 2003.
\newblock arXiv:quant-ph/0302031.

\bibitem[KHH12]{KHH2012}
J.~K. Korbicz, P.~Horodecki, and R.~Horodecki.
\newblock Quantum-correlation breaking channels, broadcasting scenarios, and
  finite {Markov} chains.
\newblock {\em Physical Review A}, 86(4):042319, October 2012.
\newblock arXiv:1208.2162.

\bibitem[Kin03]{King}
Christopher King.
\newblock Maximal p-norms of entanglement breaking channels.
\newblock {\em Quantum Information and Computation}, 3(2):186--190, 2003.
\newblock arXiv:quant-ph/0212057.

\bibitem[KW09]{KW}
Robert Koenig and Stephanie Wehner.
\newblock A strong converse for classical channel coding using entangled
  inputs.
\newblock {\em Physical Review Letters}, 103(7):070504, August 2009.
\newblock arXiv:0903.2838.

\bibitem[MH11]{MH}
Mil\'an Mosonyi and Fumio Hiai.
\newblock On the quantum {R\'enyi} relative entropies and related capacity
  formulas.
\newblock {\em IEEE Transactions on Information Theory}, 57(4):2474--2487,
  April 2011.
\newblock arXiv:0912.1286.

\bibitem[MLDS{\etalchar{+}}13]{MDSFT13}
Martin {M\"uller}-Lennert, Fr\'ed\'eric Dupuis, Oleg Szehr, Serge Fehr, and
  Marco Tomamichel.
\newblock On quantum {R\'enyi} entropies: A new generalization and some
  properties.
\newblock {\em Journal of Mathematical Physics}, 54(12):122203, December 2013.
\newblock arXiv:1306.3142.

\bibitem[MO17]{MO14}
Mil\'an Mosonyi and Tomohiro Ogawa.
\newblock Strong converse exponent for classical-quantum channel coding.
\newblock {\em Communications in Mathematical Physics}, 355(1):373--426,
  October 2017.
\newblock arXiv:1409.3562v7.

\bibitem[Nag01]{N}
Hiroshi Nagaoka.
\newblock Strong converse theorems in quantum information theory.
\newblock {\em Proceedings of ERATO Workshop on Quantum Information Science},
  page~33, 2001.
\newblock Also appeared in Asymptotic Theory of Quantum Statistical Inference,
  ed. M. Hayashi, World Scientific, 2005.

\bibitem[ON99]{ON99}
Tomohiro Ogawa and Hiroshi Nagaoka.
\newblock Strong converse to the quantum channel coding theorem.
\newblock {\em IEEE Transactions on Information Theory}, 45(7):2486--2489,
  November 1999.
\newblock arXiv:quant-ph/9808063.

\bibitem[OPW97]{OPW}
Masanori Ohya, Denes Petz, and Noburu Watanabe.
\newblock On capacities of quantum channels.
\newblock {\em Probability and Mathematical Statistics-Wroclaw University},
  17:179--196, 1997.

\bibitem[PV10]{PV}
Yury Polyanskiy and Sergio Verd\'u.
\newblock Arimoto channel coding converse and {R\'enyi} divergence.
\newblock {\em Proceedings of the 48th Annual Allerton Conference on
  Communication, Control, and Computation}, pages 1327--1333, September 2010.

\bibitem[Sho02]{ShorEB}
Peter~W. Shor.
\newblock Additivity of the classical capacity of entanglement-breaking quantum
  channels.
\newblock {\em Journal of Mathematical Physics}, 43(9):4334--4340, October
  2002.
\newblock arXiv:quant-ph/0201149.

\bibitem[Sio58]{S58}
Maurice Sion.
\newblock On general minimax theorems.
\newblock {\em Pacific Journal of Mathematics}, 8(1):171--176, 1958.

\bibitem[SS09]{PhysRevLett.103.120503}
Graeme Smith and John~A. Smolin.
\newblock Extensive nonadditivity of privacy.
\newblock {\em Physical Review Letters}, 103(12):120503, September 2009.
\newblock arXiv:0904.4050.

\bibitem[SW97]{SW}
Benjamin Schumacher and Michael Westmoreland.
\newblock Sending classical information via noisy quantum channels.
\newblock {\em Physical Review A}, 56(1):131--138, July 1997.

\bibitem[SW01]{SBW}
Benjamin Schumacher and Michael~D Westmoreland.
\newblock Optimal signal ensembles.
\newblock {\em Physical Review A}, 63(2):022308, January 2001.
\newblock arXiv:quant-ph/9912122.

\bibitem[SW02]{SW02}
Benjamin Schumacher and Michael~D Westmoreland.
\newblock Relative entropy in quantum information theory.
\newblock {\em Contemporary Mathematics}, 305:265--290, 2002.
\newblock arXiv:quant-ph/0004045.

\bibitem[TWW17]{TWW14}
Marco Tomamichel, Mark~M. Wilde, and Andreas Winter.
\newblock Strong converse rates for quantum communication.
\newblock {\em IEEE Transactions on Information Theory}, 63(1):715--727,
  January 2017.
\newblock arXiv:1406.2946.

\bibitem[Ume62]{Umegaki}
Hisaharu Umegaki.
\newblock Conditional expectation in an operator algebra.
\newblock {\em Kodai Mathematical Seminar Reports}, 14(2):59--85, 1962.

\bibitem[Wil16]{wilde2011classical}
Mark~M. Wilde.
\newblock From classical to quantum {Shannon} theory.
\newblock March 2016.
\newblock arXiv:1106.1445v7.

\bibitem[Win99]{W}
Andreas Winter.
\newblock Coding theorem and strong converse for quantum channels.
\newblock {\em IEEE Transactions on Information Theory}, 45(7):2481--2485,
  1999.
\newblock arXiv:1409.2536.

\bibitem[Wol78]{WolfStrong}
Jacob Wolfowitz.
\newblock {\em Coding Theorems of Information Theory}.
\newblock Springer-Verlag New York, Inc., Secaucus, NJ, USA, 3rd edition, 1978.

\bibitem[WWY14]{WWY13}
Mark~M. Wilde, Andreas Winter, and Dong Yang.
\newblock Strong converse for the classical capacity of entanglement-breaking
  and {Hadamard} channels via a sandwiched {R\'enyi} relative entropy.
\newblock {\em Communications in Mathematical Physics}, 331(2):593--622,
  October 2014.
\newblock arXiv:1306.1586.

\end{thebibliography}

\end{document}